\definecolor{DarkGreen}{rgb}{0.1,0.5,0.1}
\definecolor{DarkRed}{rgb}{0.5,0.1,0.1}
\definecolor{DarkBlue}{rgb}{0.1,0.1,0.5}
\def\draft{1}   
    \def\ShowAuthNotes{1}
    \def\ShowAuthNotes{0}
\newcommand{\authnote}[2]{{ \footnotesize \bf{\color{red}[#1's Note: {\color{blue}#2}]}}}
\newcommand{\authnote}[2]{}
\newtheorem{theorem}{Theorem}[section]
\newtheorem{open}{Open Question}[section]
\newtheorem{corollary}[theorem]{Corollary}
\newtheorem{lemma}[theorem]{Lemma}
\newtheorem{proposition}{Proposition}[section]
\newtheorem{claim}[theorem]{Claim}
\newtheorem{fact}{Fact}[section]
\theoremstyle{definition}
\newtheorem{definition}{Definition}[section]
\newtheorem{remark}{Remark}[section]
\newcommand{\poly}{{\rm poly}}
\newcommand{\remove}[1]{}
\renewcommand{\mathbb}{\varmathbb}
\newcommand{\Esymb}{\mathbb{E}}
\newcommand{\Psymb}{\mathbb{P}}
\DeclareMathOperator*{\E}{\Esymb}
\DeclareMathOperator*{\ProbOp}{\Psymb r}
\renewcommand{\Pr}{\ProbOp}
\newcommand{\mper}{\,.}
\newcommand{\mcom}{\,,}
\newcommand{\R}{\mathbb{R}}
\newcommand{\cI}{{\cal I}}
\newcommand{\Set}[1]{\left\{#1\right\}}
\newcommand{\bits}{\left\{0,1\right\}}
\newcommand{\defeq}{\stackrel{\small \mathrm{def}}{=}}
\newcommand{\opt}{\mathrm{opt}}
\newcommand{\rd}{\mathrm{d}}
\newcommand{\tv}{\mathrm{tv}}
\newcommand{\Sfp}{Self-fulfilling prophecy}
\newcommand{\vendor}{vendor}
\newcommand{\Scomp}{S^c}
\newcommand{\network}{data owner}
\newcommand\bias{\mathrm{bias}}
\newcommand\EM{\mathrm{EM}}
\newcommand\ExM{\mathrm{E}}
\newcommand\dtoD{D,d}
\newcommand\dtoDtv{D_\tv,\, d}
\newcommand\dtoDinfty{D_\infty,\, d}
\newcommand\ad{\mathrm{ad}}
\begin{document}

\title{Fairness Through Awareness}
\vspace{5mm}

\author{Cynthia Dwork\thanks{Microsoft Research Silicon Valley, Mountain View,
CA, USA. Email: {\tt dwork@microsoft.com}}
\and Moritz Hardt\thanks{
IBM Research Almaden, San Jose, CA, USA. Email: {\tt mhardt@us.ibm.com}. Part of
this work has been done while the author visited Microsoft Research Silicon
Valley.}
\and Toniann Pitassi\thanks{University of Toronto, Department of Computer Science.
Supported by NSERC. Email: {\tt toni@cs.toronto.edu}. Part of this work has been done while the author visited
Microsoft Research Silicon Valley.}
\and
Omer Reingold\thanks{Microsoft Research Silicon Valley, Mountain View,
CA, USA. Email: {\tt omer.reingold@microsoft.com}}
\and Richard Zemel\thanks{University of Toronto, Department of Computer Science.
Supported by NSERC.
Email: {\tt zemel@cs.toronto.edu}. Part of this work has been done while the author visited
Microsoft Research Silicon Valley.}}

\maketitle
\thispagestyle{empty}

\begin{abstract}
We study {\em fairness in classification}, where individuals are classified, 
e.g., admitted to a university, and the goal is to prevent
discrimination against individuals based on their membership in
some group, while maintaining utility for the classifier (the
university).  
The main conceptual contribution of this paper
    is a framework for fair classification comprising (1) a
    (hypothetical) task-specific metric for determining the degree to
    which individuals are similar with respect to the classification
    task at hand; (2) an algorithm for maximizing utility subject to
    the {\em fairness constraint}, that similar individuals are treated
    similarly.
We also present an adaptation of our approach to achieve the
complementary goal of ``fair affirmative action,'' which
    guarantees {\em statistical parity} (i.e., the demographics
of the set of individuals receiving any classification are the same as
the demographics of the underlying population), while treating similar
    individuals as similarly as possible.  
Finally, we discuss the relationship of fairness to privacy: when
fairness implies privacy, and how tools developed in
the context of differential privacy may be applied to fairness.
\end{abstract}

\vfill
\pagebreak

\setcounter{page}{1}

\section{Introduction}
\newcommand{\ep}{Embarrassing Product}

In this work, we study fairness in classification.
Nearly all classification tasks face the challenge of achieving utility in
classification for some purpose, while at the same time preventing
discrimination against protected population subgroups.  A motivating example
is membership in a racial minority in the context of banking.  An article in
The Wall Street Journal (8/4/2010) describes the practices of a credit card company
and its use of a tracking network 
to learn detailed demographic
information about each visitor to the site, such as approximate income, where
she shops, the fact that she rents children's videos, and so on. According to
the article, this information is used to ``decide which credit cards to show
first-time visitors'' to the web site, raising the concern of {\em steering,}
namely the (illegal) practice of guiding members of minority groups into less
advantageous credit offerings~\cite{WSJ:8-4-10}.

We provide a normative approach to fairness in classification and a
framework for achieving it.  Our framework permits us to formulate the
question as an optimization problem that can be solved by a linear program.
In keeping with the motivation of fairness in online advertising, our approach
will permit the entity that needs to classify individuals, which we call the
{\it \vendor{}}, as much freedom as possible, without knowledge of or trust in
this party.  This allows the \vendor{} to benefit from investment in data
mining and market research in designing its classifier, while our absolute
guarantee of fairness frees the \vendor{} from regulatory concerns.

Our approach is centered around the notion of a task-specific {\em
  similarity metric} describing the extent to which pairs of
individuals should be regarded as similar for the classification task
at hand.\footnote{%
Strictly speaking, we only require a function $d\colon
  V\times V\to \mathbb{R}$ where $V$ is the set of individuals,
  $d(x,y)\ge 0$, $d(x,y)=d(y,x)$ and $d(x,x)=0.$}
The similarity metric expresses ground truth.
When ground truth
is unavailable, the metric may reflect the ``best'' available
approximation as agreed upon by society.  Following established
tradition~\cite{Rawls01}, the metric is assumed to be public and open
to discussion and continual refinement.  Indeed, we envision that,
typically, the distance metric would be externally imposed, for
example, by a regulatory body, or externally proposed, by a civil
rights organization.

The choice of a metric need not determine (or even suggest) a
particular classification scheme. There can be many classifiers consistent
with a single metric. Which classification scheme is chosen in the end is
a matter of the \vendor 's utility function which we take into account. To give
a concrete example, consider a metric that expresses which individuals have
similar credit worthiness. One advertiser may wish to target a specific
product to individuals with low credit, while another advertiser may seek
individuals with good credit.

\subsection{Key Elements of Our Framework}

\paragraph{Treating similar individuals similarly.}
We capture fairness by the principle that any two individuals who are similar
{\em with respect to a particular task} should be classified similarly.  In
order to accomplish this \emph{individual-based} fairness, we assume a
distance metric that defines the similarity between the individuals. This is
the source of ``awareness" in the title of this paper. 
We formalize this guiding principle as a \emph{Lipschitz} condition on 
the classifier. 
In our approach a
classifier is a randomized mapping from individuals to outcomes, or
equivalently, a mapping from individuals to distributions over outcomes. 
The Lipschitz condition requires that any two individuals $x,y$ that
are at distance $d(x,y)\in[0,1]$ map to distributions $M(x)$ and $M(y),$
respectively, such that the statistical distance between $M(x)$ and $M(y)$ is
at most $d(x,y).$ In other words, the distributions over outcomes observed by $x$ and $y$ 
are indistinguishable up to their distance $d(x,y).$ 
\paragraph{Formulation as an optimization problem.}
We consider the natural optimization problem of constructing fair (i.e.,
Lipschitz) classifiers that minimize the expected utility loss of
the \vendor{}.
We observe that this optimization problem can be expressed as a linear program
and hence solved efficiently. Moreover, this linear program and its dual
interpretation will be used heavily throughout our work.

\paragraph{Connection between individual fairness and group fairness.}
\emph{Statistical parity} is the property that the demographics of those
receiving positive (or negative) classifications are identical to the
demographics of the population as a whole. Statistical parity speaks to
\emph{group fairness} rather than individual fairness, and appears 
desirable, as it equalizes outcomes across protected and non-protected
groups.
However, we demonstrate its inadequacy as
a notion of fairness through several examples in which statistical parity is
maintained, but from the point of view of an individual, the outcome is
blatantly unfair.
While statistical parity (or group fairness) is insufficient by itself, we
investigate conditions under which  our notion of fairness implies
statistical parity.  In Section~\ref{sec:lip-sp}, we give conditions on the
similarity metric, via an Earthmover distance, such that fairness for
individuals (the Lipschitz condition) yields group fairness (statistical
parity). More precisely, we show that the Lipschitz condition implies
statistical parity between two groups if and only if the Earthmover distance
between the two groups is small. This characterization is an important tool in
understanding the consequences of imposing the Lipschitz condition.

\paragraph{Fair affirmative action.}
In Section~\ref{sec:preferential}, we give
techniques for forcing statistical parity when it is not implied by the
Lipschitz condition (the case of preferential treatment), while preserving as
much fairness for individuals as possible. We interpret these results as
providing a way of achieving \emph{fair affirmative action}.

\paragraph{A close relationship to privacy.}
We observe that our definition of fairness is a generalization of the notion
of differential privacy~\cite{Dwork06,DworkMcNiSm06}.  We draw an analogy
between individuals in the setting of fairness and databases in the setting of
differential privacy. In Section~\ref{sec:expmechanism} we build on this
analogy and exploit techniques from differential privacy to develop a more
efficient variation of our fairness mechanism. We prove that our solution has
small error when the metric space of individuals has small doubling dimension,
a natural condition arising in machine learning applications. We also prove a
lower bound showing that \emph{any} mapping satisfying the Lipschitz condition
has error that scales with the doubling dimension. Interestingly, these results 
also demonstrate a quantiative trade-off between fairness and utility.
Finally, we touch on the extent to which fairness can 
hide information from the advertiser 
in the context of online advertising.

\paragraph{Prevention of certain evils.}
We remark that our notion of fairness interdicts a catalogue of discriminatory
practices including the following, described
 in Appendix~\ref{sec:taxonomy}: redlining; reverse redlining;
discrimination based on redundant encodings of membership in the protected
set; cutting off business with a segment of the population in which membership
in the protected set is disproportionately high; doing business with the
``wrong'' subset of the protected set (possibly in order to prove a point);
and ``reverse tokenism.''

\subsection{Discussion: The Metric}
As noted above, the metric should (ideally) capture ground truth.
Justifying the {\em availability of} or {\em access to} 
the distance metric in various settings
is one of the most challenging aspects of our framework,
and in reality the metric used will most likely only be society's 
current best approximation to the truth.
Of course, metrics are employed, implicitly or explicitly, in
many classification settings, such as college admissions procedures,
advertising (``people who buy X and live in zipcode Y are similar to
people who live in zipcode Z and buy W''), and loan applications
(credit scores).  Our work advocates for making these metrics public.

An intriguing example of an existing metric designed for the health
care setting is part of the AALIM project~\cite{AALIM},
whose goal is
to provide a decision support system for cardiology that
helps a physician in finding a suitable diagnosis for a patient based
on the consensus opinions of other physicians who have looked at
similar patients in the past. Thus the system requires an accurate
understanding of which patients are \emph{similar} based on information from
multiple domains such as cardiac echo videos, heart sounds, ECGs and
physicians' reports.  AALIM seeks to ensure that individuals with similar
health characteristics receive similar treatments from physicians. This work
could serve as a starting point in the fairness setting, although it does not
(yet?) provide the distance metric that our approach requires.  We discuss
this further in Section~\ref{sec:metric}.

Finally, we can envision classification situations in which
it is desirable to ``adjust'' or
otherwise ``make up'' a metric, and use this synthesized
metric as a basis for determining which pairs of individuals should
be classified similarly.\footnote{%
This is consistent with the practice, in some college admissions offices,
of adding a certain number of points to SAT scores of students in 
disadvantaged groups.}  Our machinery is agnostic as to the ``correctness''
of the metric, and so can be employed in these settings as well.

\subsection{Related Work}
There is a broad literature on fairness, notably in social choice theory,
game theory, economics, and law.
Among the most relevant are theories of fairness and algorithmic
approaches to apportionment; see, for example, the following books: 
H. Peyton Young's {\em Equity}, John Roemer's {\em Equality of Opportunity}
and {\em Theories of Distributive Justice}, as well as John Rawls' 
{\em A Theory of Justice} and {\em Justice as Fairness: A Restatement}.
Calsamiglia~\cite{Calsamiglia05} explains, 
\begin{quote}
``Equality of opportunity defines an important welfare
criterion in political philosophy and policy analysis.  Philosophers define
equality of opportunity as the requirement that an individual's well being be
independent of his or her irrelevant characteristics.  The difference
among philosophers is mainly about which characteristics should be
considered irrelevant.  Policymakers, however, are often called upon to 
address more specific questions: How should admissions policies be designed 
so as to provide equal opportunities for college? Or how should tax
schemes be designed so as to equalize opportunities for income?  
These are called local distributive justice problems, because each 
policymaker is in charge of achieving equality of opportunity to a 
specific issue.''
\end{quote}
In general, local solutions do not, taken together, solve the global problem:
``There is no mechanism comparable to the invisible hand of the market for
coordinating distributive justice at the micro into just outcomes at the macro
level''~\cite{Young95}, (although Calsamiglia's work treats exactly this
problem~\cite{Calsamiglia05}).  Nonetheless, our work is decidedly ``local,''
both in the aforementioned sense and in our definition of fairness.  To our
knowledge, our approach differs from much of the literature in our fundamental
skepticism regarding the \vendor ; we address this by separating the \vendor{}
from the \network , leaving classification to the latter.

Concerns for ``fairness'' also arise in many contexts in computer science,
game theory, and economics.  For example, in the distributed computing
literature, one meaning of fairness is that a process that attempts infinitely
often to make progress eventually makes progress. One quantitative meaning of
{\em unfairness} in scheduling theory is the maximum, taken over all members
of a set of long-lived processes, of the difference between the actual load on
the process and the so-called {\em desired} load (the desired load is a
function of the tasks in which the process participates)~\cite{AjtaiANRSW98};
other notions of fairness appear in \cite{BansalS06,Feige08,FeigeT11}, to name
a few.  For an example of work incorporating fairness into game theory and
economics see the eponymous paper~\cite{Rabin93}.

\section{Formulation of the Problem}
\label{sec:formulation}

In this section we describe our setup in its most basic form. We shall later
see generalizations of this basic formulation.
{\em Individuals} are the objects to be classified; we denote the set of
individuals by $V$.
In this paper we consider classifiers that map individuals
to outcomes. We denote the set of outcomes by~$A.$ In the simplest
non-trivial case~$A=\bits.$ To ensure fairness, we will consider randomized
classifiers mapping individuals to distributions over outcomes.
To introduce our notion of fairness we assume the existence of a
metric on individuals $d\colon V\times V\to\mathbb{R}.$
We will consider randomized mappings $M\colon V\to \Delta(A)$
from individuals to probability distributions
over outcomes. Such a mapping naturally describes a randomized classification
procedure: to classify $x \in V$ choose an outcome
$a$ according to the distribution $M(x)$.
We interpret the goal of ``mapping similar people similarly''
to mean that the distributions assigned to similar people are similar.
Later we will discuss two specific measures of similarity of 
distributions, $D_\infty$ and $D_\tv$,
of interest in this work.
\begin{definition}[Lipschitz mapping]
\label{def:Lipschitz-map}
A mapping $M\colon V\to\Delta(A)$ satisfies the \emph{$(\dtoD)$-Lipschitz}
property if for every $x,y\in V,$ we have
\begin{equation}\label{eq:lipschitz}
D(Mx,My)\le d(x,y)\mper
\end{equation}
When $D$ and $d$ are clear from the context we will refer to this
simply as the \emph{Lipschitz} property. 
\end{definition}
\remove{
\begin{definition}\label{def:Lipschitz}
Let $D$ be a distance measure between distributions.
A randomized mapping $F\colon V\to A$ satisfies the \emph{$d$-Lipschitz}
property if for every two individuals $x,y\in V$ and every set of outcomes
$O\subseteq A,$
\[
\frac{\Pr\{F(x)\in O\}}{\Pr\{F(y)\in O\}}
\le \exp(d(x,y))\mper
\]
\end{definition}
}
\remove{
We think of two individuals $x,y$ as similar if $d(x,y)\ll 1.$ In this case,
the above condition ensures that $x$ and $y$ map to similar distributions
over~$A.$ On the other hand, when $x,y$ are very dissimilar, i.e., $d(x,y)\gg
1,$ the condition imposes only a weak constraint on the two corresponding
distributions over outcomes.
}
We note that there always exists a Lipschitz classifier, for example, by
mapping all individuals to the same distribution over~$A.$ Which classifier we
shall choose thus depends on a notion of utility. We capture utility using a
\emph{loss function}~$L\colon V\times A\to\R.$ This setup naturally leads
to the optimization problem:
\begin{quote}
Find a mapping from individuals to distributions over outcomes
that minimizes expected loss subject to the Lipschitz condition.
\end{quote}

\subsection{Achieving Fairness}
\label{sec:achieving}
 Our fairness definition leads to an optimization problem in which we minimize
an arbitrary loss function $L\colon V\times A\to\mathbb{R}$ while achieving
the~$(\dtoD)$-Lipschitz property for a given metric~$d\colon V\times V\to\R$.
We denote by $\cI$ an instance of our problem consisting
of a metric~$d\colon V\times V\to\R,$ and a loss function $L\colon V\times
A\to\R.$
We denote the optimal value of the minimization problem by $\mathrm{opt}(\cI)$,
as formally defined in Figure~\ref{fig:opt}. We will also write the mapping $M\colon V\to\Delta(A)$ as
$M=\{\mu_x\}_{x\in V}$ where $\mu_x=M(x)\in\Delta(A).$

\begin{figure}[h]
\begin{align} \label{eq:fairness-lp}
\mathrm{opt}(\cI)\defeq\quad\min_{\{\mu_x\}_{x\in V}}\quad
& \E_{x\sim V}\E_{a\sim\mu_x} L(x,a)  \\
\text{subject to}\quad
&\forall x,y\in V,\colon\quad
D(\mu_x,\mu_y) \le d(x,y)
\label{eq:lip-c}\\
& \forall x\in V\colon\quad\mu_x\in \Delta(A)
\end{align}
\caption{The Fairness LP: Loss minimization subject to fairness constraint}
\label{fig:opt}
\end{figure}

\paragraph{Probability Metrics}
The first choice for $D$ that may come to mind is the statistical distance:
Let $P,Q$ denote probability measures on a finite domain~$A.$ The
\emph{statistical distance} or \emph{total variation norm} between $P$ and $Q$
is denoted by
\begin{equation}\label{eq:tvnorm}
D_\tv(P,Q)=\frac12\sum_{a\in A}|P(a)-Q(a)|\mper
\end{equation}

The following lemma is easily derived from the definitions of $\opt(\cI)$ and $D_\tv$.
\begin{lemma}\label{lem:opt}
Let $D = D_\tv$.
Given an instance $\cI$ we can compute $\opt(\cI)$
with a linear program of size $\poly(|V|,|A|).$
\end{lemma}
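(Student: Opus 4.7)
The plan is to write the problem in Figure~\ref{fig:opt} as an explicit linear program after linearizing the total variation distance. The decision variables are $\mu_x(a)$ for each $x \in V$ and $a \in A$, giving $|V|\cdot|A|$ variables, together with auxiliary variables $z_{x,y,a}$ for each triple $(x,y,a) \in V\times V\times A$ that will stand for $|\mu_x(a) - \mu_y(a)|$. The objective $\E_{x\sim V}\E_{a\sim\mu_x} L(x,a) = \tfrac{1}{|V|}\sum_{x,a} L(x,a)\mu_x(a)$ is already linear in the $\mu_x(a)$'s.

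Next I would encode the constraints. The simplex constraints $\mu_x \in \Delta(A)$ become $\sum_{a\in A} \mu_x(a) = 1$ and $\mu_x(a)\ge 0$ for every $x,a$, which is $|V|$ equalities and $|V|\cdot|A|$ nonnegativity constraints. For the Lipschitz constraint, I would replace $D_\tv(\mu_x,\mu_y) = \tfrac12\sum_a |\mu_x(a)-\mu_y(a)| \le d(x,y)$ by the standard linearization
\begin{align*}
z_{x,y,a} &\ge \mu_x(a) - \mu_y(a), \\
z_{x,y,a} &\ge \mu_y(a) - \mu_x(a), \\
\tfrac12 \sum_{a\in A} z_{x,y,a} &\le d(x,y),
\end{align*}
for every pair $x,y\in V$. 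Because the $z_{x,y,a}$'s appear with positive coefficients in a single upper-bound constraint and are only lower-bounded elsewhere, any optimal solution will set $z_{x,y,a}=|\mu_x(a)-\mu_y(a)|$ at optimum, so the linearized constraint is equivalent to \eqref{eq:lip-c} with $D=D_\tv$.

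Finally, I would tally sizes: the LP has $O(|V|^2\cdot|A|)$ variables and $O(|V|^2\cdot|A|)$ constraints, so it is of polynomial size in $|V|$ and $|A|$. Since linear programs of polynomial size can be solved in polynomial time (e.g., by the ellipsoid or interior-point method), this proves the claim.

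I do not see a substantial obstacle here; the only subtle point worth stating carefully is the correctness of the linearization, namely that relaxing $z_{x,y,a}=|\mu_x(a)-\mu_y(a)|$ to the two one-sided inequalities does not enlarge the feasible set, because the $z$-variables are pushed down to their lower envelope by the single upper bound involving $d(x,y)$. Everything else is a routine translation of Figure~\ref{fig:opt} into LP standard form.
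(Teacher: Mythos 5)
Your approach is correct and is the natural one; the paper itself does not spell out a proof for this lemma, stating only that it follows easily from the definitions of $\opt(\cI)$ and $D_\tv$, so your explicit linearization via auxiliary variables $z_{x,y,a}$ is a perfectly good filling-in. The variable and constraint counts are right and give a polynomial-size LP.

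One step of your justification is imprecise. You claim that any optimal solution must set $z_{x,y,a}=|\mu_x(a)-\mu_y(a)|$, and later that the $z$-variables are ``pushed down to their lower envelope by the single upper bound involving $d(x,y)$.'' Neither is forced: the objective does not mention the $z$'s, and the constraint $\tfrac12\sum_a z_{x,y,a}\le d(x,y)$ caps only the sum, not each coordinate, so an LP solver may legitimately return any $z$ dominating the absolute differences without attaining them. More to the point, equivalence of the two LPs is a statement about feasible regions, not about what happens at a particular optimum. The clean argument your plan actually relies on is that the projection of the linearized feasible set onto the $\mu$-variables equals the original feasible set: if $\mu$ satisfies $D_\tv(\mu_x,\mu_y)\le d(x,y)$, take $z_{x,y,a}=|\mu_x(a)-\mu_y(a)|$ to get a feasible point of the linearized LP; conversely, any feasible $(\mu,z)$ has $z_{x,y,a}\ge|\mu_x(a)-\mu_y(a)|$, hence $D_\tv(\mu_x,\mu_y)\le\tfrac12\sum_a z_{x,y,a}\le d(x,y)$. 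Since the objective depends only on $\mu$, the two programs have identical optimal values. With that rephrasing your proof is complete and matches what the paper intends.
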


\remove{
We also introduce a distance measure sometimes called
\emph{relative $\ell_\infty$ metric}:
\begin{equation}\label{eq:relinfty}
D_\infty(P,Q)
=\sup_{a\in A}\log\left(
\max\left\{
\frac{P(a)}{Q(a)},\frac{Q(a)}{P(a)}\right\}
\right)
\mper
\end{equation}

\begin{definition}[Lipschitz mapping]
\label{def:Lipschitz-map}
A mapping $M\colon V\to\Delta(A)$ satisfies the \emph{$(\dtoD)$-Lipschitz}
property if for every $x,y\in V,$ we have
\begin{equation}\label{eq:lipschitz}
D(Mx,My)\le d(x,y)\mper
\end{equation}
When $D$ and $d$ are clear from the context we will refer to this simply as
the \emph{Lipschitz} property. In this case we may call $M$ a
\emph{Lipschitz mapping}.
\end{definition}
\begin{remark}
We note that Definition~\autoref{def:Lipschitz} is just a special case of the
previous definition by taking $D=D_\infty$ and thinking of a randomized
classifier as a mapping $M\colon V\to\Delta(A)$ in the natural way. The choice
of $D_\infty$ in Definition~\ref{def:Lipschitz} is not crucial for its use as
a fairness definition. We could have also chosen $D_\tv$ instead. In fact, we
will later also study $(D_\tv,d)$-Lipschitz mappings. It is however important
to note that the choice of a probability metric affects what the right
scaling of the metric~$d$ is.
\end{remark}
}

\begin{remark}
When dealing with the set~$V$, we have assumed that $V$ is
the set of real individuals (rather than the potentially huge set of
all possible encodings of individuals).
More generally, we may only have
access to a subsample from the set of interest.
In such a case, there is the additional
challenge of extrapolating a classifier over the entire set.
\end{remark}

A weakness of using $D_\tv$ as the distance measure on distributions, it that we should then assume that the distance
metric (measuring distance between individuals) is scaled such that for similar individuals $d(x,y)$ is very close to zero,
while for very dissimilar individuals $d(x,y)$ is close to one. A potentially better choice for $D$ in this respect is
sometimes called
\emph{relative $\ell_\infty$ metric}:
\begin{equation}\label{eq:relinfty}
D_\infty(P,Q)
=\sup_{a\in A}\log\left(
\max\left\{
\frac{P(a)}{Q(a)},\frac{Q(a)}{P(a)}\right\}
\right)
\mper
\end{equation}
With this choice
we think of two individuals $x,y$ as similar if $d(x,y)\ll 1.$ In this case,
the Lipschitz condition in Equation~\ref{eq:lipschitz}
ensures that $x$ and $y$ map to similar distributions
over~$A.$ On the other hand, when $x,y$ are very dissimilar, i.e., $d(x,y)\gg
1,$ the condition imposes only a weak constraint on the two corresponding
distributions over outcomes.
\begin{lemma}\label{lem:optinfty}
Let $D = D_\infty$.
Given an instance $\cI$ we can compute $\opt(\cI)$
with a linear program of size $\poly(|V|,|A|).$
\end{lemma}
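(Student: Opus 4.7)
The plan is to exhibit an explicit linear program of size $\poly(|V|,|A|)$ whose optimum equals $\opt(\cI)$, mirroring the construction for $D_\tv$ (Lemma~1.1) but with the Lipschitz constraints replaced by ones derived from $D_\infty$. The only potentially delicate point is that $D_\infty$ is defined via a logarithm of a ratio of probabilities, which at first glance looks nonlinear; the proof rests on showing this is an illusion.

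Concretely, I would first unpack the constraint $D_\infty(\mu_x,\mu_y)\le d(x,y)$. By definition this says that for every outcome $a\in A$,
\[
\log\max\left\{\frac{\mu_x(a)}{\mu_y(a)},\frac{\mu_y(a)}{\mu_x(a)}\right\}\le d(x,y)\mper
\]
By monotonicity of $\log$ and after clearing denominators, this is equivalent to the pair of inequalities $\mu_x(a)\le e^{d(x,y)}\mu_y(a)$ and $\mu_y(a)\le e^{d(x,y)}\mu_x(a)$. The crucial observation is that each of these is linear in the unknowns $\mu_x(a),\mu_y(a)$, since $e^{d(x,y)}$ is a constant determined by the instance.

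With this reduction in hand, I would assemble the LP as follows. The variables are $\{\mu_x(a):x\in V,\, a\in A\}$, a total of $|V|\cdot|A|$ of them. The objective is the linear functional $\E_{x\sim V}\sum_{a\in A}L(x,a)\mu_x(a)$. The constraints are the non-negativity constraints $\mu_x(a)\ge 0$, the probability-simplex constraints $\sum_{a\in A}\mu_x(a)=1$ for each $x\in V$, and for every ordered pair $x,y\in V$ and every $a\in A$, the Lipschitz constraint $\mu_x(a)\le e^{d(x,y)}\mu_y(a)$. The symmetric bound is obtained by the constraint indexed by the pair $(y,x)$, so we need only one direction per ordered pair. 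The total number of constraints is $O(|V|^2|A|)$, and every coefficient $e^{d(x,y)}$ is computable from the input, so the whole LP has size $\poly(|V|,|A|)$ and can be solved by standard polynomial-time LP algorithms.

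The main obstacle is essentially no obstacle at all: the $\sup$ over outcomes in the definition of $D_\infty$ is handled for free by writing one constraint per $a\in A$, and the logarithm disappears the moment we clear the denominator. The only care needed is to verify that the resulting LP's feasible region coincides precisely with the set of $(D_\infty,d)$-Lipschitz mappings---which follows from the equivalence established in the first step---so that the LP optimum indeed equals $\opt(\cI)$.
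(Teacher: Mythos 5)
Your proof is correct and follows the same approach as the paper: both observe that the constraint $D_\infty(\mu_x,\mu_y)\le d(x,y)$ unpacks to linear inequalities of the form $\mu_x(a)\le e^{d(x,y)}\mu_y(a)$, and that the objective and simplex constraints are already linear. You simply spell out the bookkeeping (variable count, constraint count) more explicitly than the paper does.
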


\begin{proof}
We note that the objective function and the
first constraint are indeed linear in the
variables~$\mu_x(a),$
as the first constraint boils down to requirements of the form
$\mu_x(a) \le e^{d(x,y)}\mu_y(a)$.
The second constraint $\mu_x\in\Delta(A)$ can easily be
rewritten as a set of linear constraints.
\end{proof}

\paragraph{Notation.}
Recall that we often write the mapping $M\colon V\to\Delta(A)$ as
$M=\{\mu_x\}_{x\in V}$ where $\mu_x=M(x)\in\Delta(A).$
In this case, when $S$ is a distribution over $V$ we denote by $\mu_S$ the
distribution over~$A$ defined as
$\mu_S(a) = \E_{x\sim S}\mu_x(a)$ where $a\in A\mper$

\paragraph{Useful Facts}
It is not hard to check that both~$D_\tv$ and~$D_\infty$ are metrics with the
following properties.
\begin{lemma}\label{lem:tv2inf}
$D_\tv(P,Q)\le 1-\exp(-D_\infty(P,Q))\le  D_\infty(P,Q)$
\end{lemma}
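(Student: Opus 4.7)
Let me write $\alpha = D_\infty(P,Q)$, so that by definition $e^{-\alpha}Q(a) \le P(a) \le e^{\alpha}Q(a)$ and likewise with $P,Q$ swapped, for every $a \in A$. Equivalently, $\min(P(a),Q(a)) \ge e^{-\alpha}\max(P(a),Q(a))$ for every $a$.

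The right-hand inequality $1-e^{-\alpha} \le \alpha$ is just the elementary fact that $e^{-x} \ge 1-x$ for all real $x$ (applied at $x=\alpha \ge 0$); no work is needed there.

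For the left-hand inequality, the plan is to use the standard identity
\[
D_\tv(P,Q) \;=\; 1 - \sum_{a\in A}\min(P(a),Q(a))\mper
\]
Then I bound the sum from below using the pointwise $D_\infty$ bound:
\[
\sum_{a\in A}\min(P(a),Q(a))
\;\ge\; e^{-\alpha}\sum_{a\in A}\max(P(a),Q(a))
\;\ge\; e^{-\alpha}\sum_{a\in A}P(a)
\;=\; e^{-\alpha}\mper
\]
Rearranging gives $D_\tv(P,Q) \le 1 - e^{-\alpha} = 1 - \exp(-D_\infty(P,Q))$, which is what we want.

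There is no real obstacle here; the only thing to be a little careful about is the case $D_\infty(P,Q) = \infty$ (some $P(a)$ or $Q(a)$ is zero where the other is not), in which case $e^{-\alpha} = 0$ and both claimed inequalities read $D_\tv(P,Q) \le 1 \le \infty$, which hold trivially. The identity $D_\tv(P,Q) = 1 - \sum_a \min(P(a),Q(a))$ itself follows immediately from \eqref{eq:tvnorm} by writing $|P(a)-Q(a)| = P(a)+Q(a) - 2\min(P(a),Q(a))$ and summing.
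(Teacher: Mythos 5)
Your proof is correct. The paper itself gives no proof of this lemma --- it is stated under ``Useful Facts'' with the remark that it ``is not hard to check'' --- so there is nothing to compare against, but your argument is the standard one: the identity $D_\tv(P,Q)=1-\sum_a\min(P(a),Q(a))$, the pointwise bound $\min(P(a),Q(a))\ge e^{-\alpha}\max(P(a),Q(a))$ coming directly from the definition of $D_\infty$, and $1-e^{-x}\le x$. The handling of the $D_\infty=\infty$ case and the derivation of the min-identity from~\eqref{eq:tvnorm} are both fine.
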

\begin{fact}\label{fac:tvconv}
For any three distributions $P,Q,R$ and non-negative numbers
$\alpha,\beta\ge0$ such that $\alpha+\beta=1,$ we have $D_\tv(\alpha P+\beta
Q,R)\le \alpha D_\tv(P,R)+\beta D_\tv(Q,R).$
\end{fact}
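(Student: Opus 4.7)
The plan is to prove this directly from the definition of total variation distance in Equation~\ref{eq:tvnorm} together with the ordinary triangle inequality for real numbers. Since the convex-combination coefficients $\alpha,\beta$ satisfy $\alpha+\beta=1$, we can absorb $R$ into that convex combination as well, turning the problem into a termwise application of $|u+v|\le |u|+|v|$.

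Concretely, I would first write
\[
D_\tv(\alpha P+\beta Q,R) = \tfrac12 \sum_{a\in A} \bigl|\alpha P(a)+\beta Q(a)-R(a)\bigr|.
\]
Then I would use $\alpha+\beta=1$ to rewrite $R(a)=\alpha R(a)+\beta R(a)$, so that the quantity inside the absolute value becomes $\alpha(P(a)-R(a))+\beta(Q(a)-R(a))$. Applying the triangle inequality pointwise (and using $\alpha,\beta\ge 0$ to pull them outside the absolute values) yields
\[
\tfrac12 \sum_{a\in A}\bigl|\alpha(P(a)-R(a))+\beta(Q(a)-R(a))\bigr|
\le \alpha\cdot \tfrac12\sum_{a\in A}|P(a)-R(a)| + \beta\cdot\tfrac12\sum_{a\in A}|Q(a)-R(a)|,
\]
and recognizing each sum on the right as the corresponding $D_\tv$ finishes the argument.

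Honestly, there is no real obstacle here: the fact is just the convexity of the $\ell_1$ norm seen through the $D_\tv$ definition, and the only ingredient beyond unpacking definitions is the scalar triangle inequality. The proof is two or three lines and does not require anything beyond Equation~\ref{eq:tvnorm} and the hypothesis $\alpha+\beta=1$, $\alpha,\beta\ge 0$.
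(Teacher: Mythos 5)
Your proof is correct, and it is essentially the only natural argument: the paper itself offers no proof of Fact~\ref{fac:tvconv}, simply remarking that the stated properties are ``not hard to check.'' Rewriting $R(a)=\alpha R(a)+\beta R(a)$ and applying the pointwise scalar triangle inequality is exactly the right (and shortest) route, reflecting that $D_\tv$ is half the $\ell_1$ norm of the difference and $\ell_1$ is convex.
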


\paragraph{Post-Processing.}
An important feature of our definition is that it behaves well with respect to
\emph{post-processing}. Specifically, if $M\colon V\to\Delta(A)$ is
$(D,d)$-Lipschitz for $D\in\{D_\tv,D_\infty\}$ and $f\colon A\to B$ is any
possibly randomized function from $A$ to another set $B,$ then the composition
$f\circ M\colon V\to\Delta(B)$ is a $(D,d)$-Lipschitz mapping. This would in particular be useful in the setting of
the example in Section~\ref{sec:adexample}.

\subsection{Example: Ad network}
\label{sec:adexample}

Here we expand on the example of an advertising network mentioned in
the Introduction.  We explain how the Fairness LP provides a fair
solution protecting against the evils described in
Appendix~\ref{sec:taxonomy}.  The Wall Street Journal
article~\cite{WSJ:8-4-10} describes how the [x+1] tracking network
collects demographic information about individuals, such as their
browsing history, geographical location, and shopping behavior, and
utilizes this to assign a person to one of 66 groups.  For example,
one of these groups is ``White Picket Fences,'' a market segment with
median household income of just over \$50,000, aged 25 to 44 with
kids, with some college education, etc.  Based on this assignment to a
group, CapitalOne decides which credit card, with particular terms of
credit, to show the individual.  In general we view a classification
task as involving two distinct parties: the {\it \network} is a
trusted party holding the data of individuals, and the {\it vendor} is
the party that wishes to classify individuals.  The loss function may
be defined solely by either party or by both parties in collaboration.
In this example, the data owner is the ad network [x+1], and the
vendor is CapitalOne.

The ad network ([x+1]) maintains a mapping from
individuals into categories. We can think of these categories as outcomes, as
they determine which ads will be shown to an individual. In order to comply
with our fairness requirement, the mapping from individuals into categories
(or outcomes) will have to be randomized and satisfy the Lipschitz property
introduced above. Subject to the Lipschitz constraint, the \vendor{} can
still express its own belief as to how individuals should be assigned to
categories using the loss function. However, since the Lipschitz condition is
a hard constraint there is no possibility of discriminating between
individuals that are deemed similar by the metric.
In particular, this will disallow arbitrary distinctions between
protected individuals, thus preventing both reverse tokenism
and the self-fulfilling prophecy (see Appendix~\ref{sec:taxonomy}). 
In addition, the metric can eliminate the
existence of redundant encodings of certain attributes thus also preventing
redlining of those attributes. In Section~\ref{sec:lip-sp} we will see a
characterization of which attributes are protected by the metric in this way.

\subsection{Connection to Differential Privacy}
\label{sec:privacy}
Our notion of fairness may be viewed as a generalization of
differential
privacy~\cite{Dwork06,DworkMcNiSm06}.  As it turns
out our notion can be seen as a generalization of differential privacy. To see
this, consider a simple setting of differential privacy  where a
\emph{database curator} maintains a database~$x$  (thought of as a subset of
some universe $U$) and a data analyst is allowed to ask a query $F\colon V\to
A$ on the database.  Here we denote the set of databases by $V=2^U$ and the
range of the query by~$A.$ A mapping $M\colon V\to\Delta(A)$
satisfies \emph{$\epsilon$-differential privacy} if and only if $M$ satisfies
the $(\dtoDinfty)$-Lipschitz property, where,
letting
$x\triangle y$ denote
the symmetric difference between $x$ and $y$,
we define
$d(x,y)\defeq \epsilon |x\triangle y|$.

The utility loss of the analyst for getting an answer $a\in A$ from the
mechanism is defined as $L(x,a)=d_A(Fx,a),$ that is distance of the true
answer from the given answer. Here distance refers to some distance measure in
$A$ that we described using the notation~$d_A.$ For example, when
$A=\mathbb{R}$, this could simply be~$d_A(a,b)=|a-b|.$
The optimization problem~(\ref{eq:fairness-lp}) in Figure~\ref{fig:opt}
({\it i.e.},
$\mathrm{opt}(\cI)\defeq \min
\E_{x\sim V}\E_{a\sim\mu_x} L(x,a)$)
now defines the optimal
differentially private mechanism in this setting. We can draw a conceptual
analogy between the utility model in differential privacy and that in
fairness. If we think of outcomes as representing information about an
individual, then the vendor wishes to receive what she believes is the most
``accurate'' representation of an individual. This is quite similar to the
goal of the analyst in differential privacy.

In the current work we
deal with more general metric
spaces than in differential privacy. Nevertheless, we later see (specifically in
Section~\ref{sec:expmechanism}) that some of the techniques
used in differential privacy carry over to the fairness setting.

\section{Relationship between Lipschitz property and statistical parity}
\label{sec:lip-sp}
In this section we discuss the relationship between the Lipschitz property
articulated in Definition~\ref{def:Lipschitz-map} and \emph{statistical
parity}. As we discussed earlier, statistical parity is insufficient as a
general notion of fairness. Nevertheless statistical parity can have several
desirable features,
{\it e.g.}, as described in Proposition~\ref{nice} below.  In this section we
demonstrate that the Lipschitz condition naturally implies statistical parity
between certain subsets of the population.

Formally, statistical parity is the following property.

\begin{definition}[Statistical parity]
We say that a mapping $M\colon V\to\Delta(A)$ satisfies \emph{statistical
parity} between distributions $S$ and $T$ up to bias $\epsilon$ if
\begin{equation}
D_\tv(\mu_S,\mu_T)\le\epsilon\mper
\end{equation}
\end{definition}


\begin{proposition}
\label{nice}
Let $M\colon V\to\Delta(A)$ be a mapping that satisfies statistical
parity between two sets~$S$ and $T$ up to bias~$\epsilon.$
Then, for every set of outcomes~$O\subseteq A,$
we have the following two properties.
\begin{enumerate}
\item $\left|\Pr\Set{M(x)\in O \mid x\in S}
- \Pr\Set{M(x)\in O\mid x\in T}\right|\le\epsilon,$
\item $\left|\Pr\Set{x\in S \mid M(x)\in O}-
 \Pr\Set{x\in T\mid M(x)\in O})\right|\le\epsilon\mper$
\end{enumerate}
\end{proposition}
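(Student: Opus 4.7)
The plan is to derive both inequalities directly from the hypothesis $D_\tv(\mu_S,\mu_T)\le\epsilon$, using only the dual characterization of total variation distance on events together with Bayes' rule to flip the direction of conditioning.

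For part~1, I would first unpack the two conditional probabilities. By the definition $\mu_S(a)=\E_{x\sim S}\mu_x(a)$, linearity of expectation gives, for any $O\subseteq A$,
\[
\Pr\{M(x)\in O\mid x\in S\} \;=\; \E_{x\sim S}\!\left[\mu_x(O)\right] \;=\; \mu_S(O),
\]
and similarly $\Pr\{M(x)\in O\mid x\in T\}=\mu_T(O)$. The standard identity $D_\tv(P,Q)=\sup_{O\subseteq A}|P(O)-Q(O)|$ then immediately yields $|\mu_S(O)-\mu_T(O)|\le D_\tv(\mu_S,\mu_T)\le\epsilon$, which is exactly part~1.

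For part~2, I would use Bayes' rule to invert the conditioning. Under the implicit symmetric prior on $V$ that assigns equal weight to the events $\{x\in S\}$ and $\{x\in T\}$, Bayes gives
\[
\Pr\{x\in S\mid M(x)\in O\} \;=\; \frac{\Pr\{x\in S\}\,\mu_S(O)}{\Pr\{M(x)\in O\}},
\]
and symmetrically for $T$. Subtracting these two expressions, the common denominator and the common prior factor out, leaving a quantity proportional to $\mu_S(O)-\mu_T(O)$, which is again bounded by~$\epsilon$ as in part~1.

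The main obstacle I see is interpretive rather than computational: part~2 presumes a joint distribution in which both the membership events $\{x\in S\},\{x\in T\}$ and the outcome event $\{M(x)\in O\}$ have well-defined probabilities, and this structure is not spelled out in the statement. Once a symmetric prior on $S$ and $T$ is fixed and the natural normalization by $\Pr\{M(x)\in O\}$ is adopted, the inequality reduces, via a single Bayes inversion, to the dual characterization of $D_\tv$ already used in part~1.
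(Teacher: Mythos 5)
The paper states Proposition~\ref{nice} without a proof, so there is no argument of the authors' to compare against; I can only assess your attempt on its own terms.

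Your treatment of part~1 is correct and is surely what the authors had in mind: conditioning on membership in $S$ turns $\Pr\{M(x)\in O\mid x\in S\}$ into $\E_{x\sim S}\mu_x(O)=\mu_S(O)$, and the event-based characterization $D_\tv(P,Q)=\sup_{O}|P(O)-Q(O)|$ finishes it.

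Part~2 has a genuine gap, and it is precisely at the step you describe as ``the common denominator and the common prior factor out, leaving a quantity proportional to $\mu_S(O)-\mu_T(O)$.'' Under the symmetric prior $\Pr\{x\in S\}=\Pr\{x\in T\}=\tfrac12$, Bayes gives
\[
\Pr\{x\in S\mid M(x)\in O\}-\Pr\{x\in T\mid M(x)\in O\}
=\frac{\mu_S(O)-\mu_T(O)}{\mu_S(O)+\mu_T(O)}\,.
\]
The prior factors do cancel, but the denominator $\Pr\{M(x)\in O\}=\tfrac12\bigl(\mu_S(O)+\mu_T(O)\bigr)$ does \emph{not} disappear: the ``proportionality constant'' is $1/(\mu_S(O)+\mu_T(O))$, which can be arbitrarily large. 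Concretely, take $\mu_S(O)=\epsilon$ and $\mu_T(O)=0$. Then $|\mu_S(O)-\mu_T(O)|=\epsilon$ so the statistical-parity hypothesis and part~1 are both tight, yet the posterior gap above equals $1$, far exceeding $\epsilon$. So the Bayes inversion does not reduce part~2 to the total-variation bound of part~1, and this is not a cosmetic normalization issue that a choice of prior resolves: no choice of prior salvages the step, because the problematic denominator is the marginal $\Pr\{M(x)\in O\}$, which is genuinely small exactly when $O$ is a rare outcome. You flagged the need to fix a joint distribution as ``interpretive rather than computational,'' but the obstruction is computational --- the bound you need simply does not follow. (As a side remark, your counterexample also suggests that the second item of the proposition requires a qualification, such as conditioning only on events $O$ with $\Pr\{M(x)\in O\}$ bounded below, or replacing the conclusion by a multiplicative rather than additive guarantee; as stated, the reduction to part~1 that you sketch cannot work.)
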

Intuitively, this proposition says that if $M$ satisfies statistical parity,
then members of $S$ are equally likely to observe a set of outcomes as are
members of~$T.$ Furthermore, the fact that an individual observed a particular
outcome provides no information as to whether the individual is a member
of~$S$ or a member of~$T.$ We can always choose $T=S^c$ in which case we
compare $S$ to the general population.

\subsection{Why is statistical parity insufficient?}
Although in some cases statistical parity appears to be desirable -- in
particular, it neutralizes redundant encodings --  we 
now argue its inadequacy as a notion of fairness,
presenting three examples in which statistical parity is
maintained, but from the point of view of an individual, the outcome
is blatantly unfair.  In describing these examples, we let $S$ denote
the protected set and $\Scomp$ its complement.

\begin{description}
\item[{\em Example 1: Reduced Utility.}]\label{ex:poor_utility}
  Consider the following scenario. 
  Suppose in the culture of $S$ the most talented students are
  steered toward science and engineering and the less talented
  are steered toward finance, while in the culture of $\Scomp$
  the situation is reversed: the most talented are steered toward finance
  and those with less talent are steered toward engineering.
  An organization ignorant of the culture of $S$ and seeking the most talented people may select for
  ``economics,'' arguably choosing the wrong subset of $S$, even while
  maintaining parity.
  Note that this poor outcome can occur in a ``fairness
  through blindness'' approach -- the errors come from {\em ignoring} membership
  in $S$.
\item[{\em Example 2: \Sfp .}]  This is when unqualified members of $S$ are
  chosen, in order to ``justify'' future discrimination against~$S$ (building a
  case that there is no point in ``wasting'' resources on $S$).
  Although senseless, this is an example of something pernicious that is not
  ruled out by statistical parity, showing the weakness of this notion.
A variant of this apparently occurs in selecting candidates for interviews: the hiring practices of certain firms are audited to ensure sufficiently many {\em interviews} of minority candidates, but less care is taken to ensure that the best minorities -- those that might actually compete well with the better non-minority candidates -- are invited~\cite{Zarsky11}.
\item[{\em Example 3: Subset Targeting}.] Statistical parity for $S$ does not imply
  statistical parity for subsets of $S$. This can be maliciously exploited in
  many ways. For example, consider an advertisement for a product $X$ which is
  targeted to members of $S$ that are likely to be interested in $X$ and to
  members of $\Scomp$ that are very unlikely to be interested in $X$. {\em Clicking}
  on such an ad may be strongly correlated with membership in $S$ (even if
  exposure to the ad obeys statistical parity).
\end{description}

\subsection{Earthmover distance: Lipschitz versus statistical parity}

A fundamental question that arises in our approach is:
\emph{When does the Lipschitz
condition imply statistical parity between two
distributions~$S$ and~$T$ on $V$?}
We will see that the answer to this question is closely  related to the
\emph{Earthmover distance} between $S$ and $T$, which we will define shortly.

The next definition formally introduces the quantity that we will
study, that is, the extent to which any Lipschitz mapping can
violate statistical parity.
In other words, we answer the question,
``How biased with respect to $S$ and $T$ might the solution of the
fairness LP be, in the worst case?''
\begin{definition}[Bias]
We define
\begin{equation}
\bias_{\dtoD}(S,T)\defeq\max \mu_S(0)-\mu_T(0)\mcom
\end{equation}
where the maximum is taken over all $(\dtoD)$-Lipschitz mappings
$M=\{\mu_x\}_{x\in V}$ mapping~$V$ into $\Delta(\{0,1\}).$
\end{definition}

Note that $\bias_{\dtoD}(S,T)\in[0,1].$ Even though in the definition
we restricted ourselves to mappings into distributions over~$\{0,1\},$
it turns out that this is without loss of generality, as we show next.

\begin{lemma}
\label{lem:zo}
Let $D\in\{D_\tv,D_\infty\}$ and let
$M\colon V\to\Delta(A)$ be any $(\dtoD)$-Lipschitz mapping. Then,
$M$ satisfies statistical parity between $S$ and $T$ up to $\bias_{\dtoD}(S,T).$
\end{lemma}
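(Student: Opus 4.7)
The plan is to reduce the general-outcome case to the binary-outcome case by post-processing. Given any $(\dtoD)$-Lipschitz mapping $M\colon V\to\Delta(A)$, statistical parity asks us to bound $D_\tv(\mu_S,\mu_T)$, and by the elementary characterization of total variation we may write
\[
D_\tv(\mu_S,\mu_T) \;=\; \max_{O\subseteq A}\bigl(\mu_S(O)-\mu_T(O)\bigr)\mcom
\]
where without loss of generality the maximizing set $O$ yields a non-negative difference (otherwise swap $S$ and $T$; the bias is symmetric by negating outcomes).

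Fix such an optimal set $O$ and define the deterministic ``threshold'' function $f\colon A\to\{0,1\}$ by $f(a)=0$ if $a\in O$ and $f(a)=1$ otherwise. Viewing $f$ as a (trivially randomized) map $A\to\Delta(\{0,1\})$, the post-processing property stated just before the lemma guarantees that $M'\defeq f\circ M\colon V\to\Delta(\{0,1\})$ is again $(\dtoD)$-Lipschitz. Writing $M'=\{\mu'_x\}_{x\in V}$, by construction we have $\mu'_x(0)=\mu_x(O)$ for every $x\in V$, and hence
\[
\mu'_S(0)-\mu'_T(0) \;=\; \mu_S(O)-\mu_T(O) \;=\; D_\tv(\mu_S,\mu_T)\mper
\]

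Since $M'$ is a $(\dtoD)$-Lipschitz mapping into $\Delta(\{0,1\})$, the definition of bias yields $\mu'_S(0)-\mu'_T(0) \le \bias_{\dtoD}(S,T)$. Combining with the previous display gives $D_\tv(\mu_S,\mu_T)\le \bias_{\dtoD}(S,T)$, which is exactly the statistical parity guarantee claimed in the lemma. The only non-routine step is invoking post-processing (which the paper has already established for both $D_\tv$ and $D_\infty$); everything else is the standard reformulation of total variation as an optimization over events.
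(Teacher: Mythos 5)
Your proposal is correct and is essentially the paper's own argument: the maximizing set $O$ in the total-variation characterization is exactly the set $A_S=\{a:\mu_S(a)>\mu_T(a)\}$ that the paper uses to collapse $A$ to two outcomes, and your invocation of the post-processing fact is precisely the step the paper dispatches with ``this follows directly from the definition.'' (The parenthetical about swapping $S$ and $T$ is unnecessary — the maximum over $O$ is automatically non-negative since $O=\emptyset$ is allowed — but it is harmless.)
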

\begin{proof}
Let $M=\{\mu_x\}_{x\in V}$ be any $(\dtoD)$-Lipschitz mapping into~$A.$
We will construct a $(\dtoD)$-Lipschitz mapping $M'\colon
V\to\Delta(\{0,1\})$ which has the same bias between $S$ and $T$ as $M.$

Indeed, let $A_S=\{a\in A\colon \mu_S(a)>\mu_T(a)\}$ and let $A_T=A_S^c.$ Put
$\mu'_x(0)=\mu_x(A_S)$ and $\mu'_x(1)=\mu_x(A_T).$
We claim that $M'=\{\mu_x'\}_{x\in V}$ is a $(\dtoD)$-Lipschitz mapping.
In both cases $D\in\{D_\tv,D_\infty\}$ this follows directly from the
definition.
On the other hand, it is easy to see that
\[
D_\tv(\mu_S,\mu_T)=D_\tv(\mu'_S,\mu'_{T})=\mu'_S(0)-\mu'_T(0)\le
\bias_{\dtoD}(S,T)\mper
\qedhere
\]
\end{proof}

\paragraph{Earthmover Distance.}
We will presently relate~$\bias_{\dtoD}(S,T)$ for $D\in\{D_\tv,D_\infty\}$
to certain \emph{Earthmover distances} between $S$ and $T$, which
we define next.

\begin{definition}[Earthmover distance]
Let $\sigma\colon V\times V\to\mathbb{R}$ be a nonnegative distance function.
The $\sigma$-Earthmover distance between two distributions
$S$ and $T$, denoted
$\sigma_{\EM}(S,T),$ is
defined as the value of the so-called \emph{Earthmover LP}:
\begin{align*}
\sigma_\EM(S,T)\defeq\quad\min\quad & \sum_{x,y\in V} h(x,y)\sigma(x,y)\\
\text{subject to}\quad
& \sum_{y\in V} h(x,y) = S(x)\\
& \sum_{y\in V} h(y,x) = T(x)\\
& h(x,y)\ge 0
\end{align*}
\end{definition}

We will need the following standard lemma, which simplifies the definition of
the Earthmover distance in the case where $\sigma$ is a metric.
\begin{lemma}\label{lem:em-metric}
Let $d\colon V\times V\to\R$ be a metric. Then,
\begin{align*}
d_\EM(S,T)=\quad\min\quad & \sum_{x,y\in V} h(x,y)d(x,y)\\
\mathrm{subject\, to}\quad
& \sum_{y\in V} h(x,y) = \sum_{y\in V}
h(y,x)+S(x)-T(x)\\
& h(x,y)\ge 0
\end{align*}
\end{lemma}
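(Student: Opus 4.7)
The plan is to prove the two linear programs have equal optimal value by a pair of inequalities. The inequality $d_{\EM}(S,T) \geq$ (value of the new LP) is immediate: subtracting the two equality constraints in the original Earthmover LP yields exactly the single constraint in the new LP, so every feasible $h$ for the original LP is feasible for the new LP with the same cost.

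For the reverse inequality, I would start from an arbitrary feasible $h$ for the new LP and transform it into a feasible $h'$ for the original LP whose cost is no larger. I view $h$ as a nonnegative flow on the complete directed graph on $V$ with prescribed vertex excess $S(x) - T(x)$. First I cancel antiparallel flows: replacing $h(x,y)$ and $h(y,x)$ by $h(x,y) - c$ and $h(y,x) - c$ where $c = \min\{h(x,y), h(y,x)\}$ preserves both the outflow and inflow at $x$ and at $y$ (hence the net-flow constraint at every vertex), while reducing the objective by $2c \cdot d(x,y) \geq 0$. Next I invoke standard flow decomposition to write the resulting flow as a sum of directed path flows, each running from some $S$-excess vertex ($S(x) > T(x)$) to some $T$-excess vertex, plus directed cycle flows. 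Cycles can simply be deleted, as they preserve all net flows and contribute nonnegative cost. For each path $x_0 \to x_1 \to \cdots \to x_k$ carrying flow $\alpha$, I replace it with a single direct edge $(x_0, x_k)$ of flow $\alpha$; the triangle inequality gives $\sum_{i=0}^{k-1} d(x_i, x_{i+1}) \geq d(x_0, x_k)$, so cost does not increase, and the excess at every vertex (including the intermediate $x_i$'s) is preserved.

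After these operations the remaining flow $\tilde h$ moves mass only directly from $S$-excess vertices to $T$-excess vertices. To turn it into a feasible solution of the original LP, I add self-loops by setting $h'(x,y) = \tilde h(x,y)$ for $x \neq y$ and $h'(x,x) = \tilde h(x,x) + \min\{S(x), T(x)\}$; since $d(x,x) = 0$, the self-loops add nothing to the cost, and a direct verification confirms $\sum_y h'(x,y) = S(x)$ and $\sum_y h'(y,x) = T(x)$ for every $x$. The main obstacle is carrying out the flow decomposition cleanly and checking that each of cancellation, cycle deletion, and path shortcutting preserves the net-flow constraint at every vertex; once this bookkeeping is in place, combining the two bounds yields the identity claimed by the lemma.
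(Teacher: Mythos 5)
The paper states this lemma without proof, labeling it a ``standard lemma,'' so there is no argument in the paper to compare against. Your proof is correct and complete: the easy direction follows because subtracting the two marginal equalities yields the net-flow constraint, and for the reverse direction your sequence of cost-nonincreasing, feasibility-preserving moves---antiparallel cancellation (which needs $d(x,y)=d(y,x)\ge 0$), cycle deletion, path shortcutting via the triangle inequality, and padding with self-loops of zero cost $d(x,x)=0$---transforms any feasible net-flow solution into one with the prescribed marginals $S$ and $T$. The one implicit hypothesis worth stating explicitly is that $S$ and $T$ have equal total mass (they are probability distributions in the paper), which is what guarantees that the flow decomposition produces only source-to-sink paths and cycles and that the self-loop padding balances both marginals. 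As a side remark, the same identity can be established on the dual side via the $c$-transform: the dual of the original LP maximizes $\sum_x u(x)S(x)+\sum_x v(x)T(x)$ subject to $u(x)+v(y)\le d(x,y)$, and when $d$ is a metric one may replace $(u,v)$ by $(\tilde u,-\tilde u)$ with $\tilde u$ $1$-Lipschitz without decreasing the objective, which is exactly the dual of the reduced LP; your primal flow argument is arguably the cleaner of the two.
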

\begin{theorem}\label{thm:em-tv}
Let $d$ be a metric. Then,
\begin{equation}\label{eq:tv2em}
\bias_{\dtoDtv}(S,T)\le d_{\EM}(S,T)\mper
\end{equation}
If furthermore $d(x,y)\le1$ for all $x,y,$ then we have
\begin{equation}\label{eq:em2tv}
\bias_{\dtoDtv}(S,T)\ge d_{\EM}(S,T)\mper
\end{equation}
\end{theorem}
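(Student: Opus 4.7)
The plan is to prove the two inequalities separately, using the Earthmover primal to establish the upper bound and Kantorovich--Rubinstein-style LP duality for the matching lower bound.

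For the upper bound (\ref{eq:tv2em}), I would fix any optimal flow $h$ for the Earthmover LP of Lemma~\ref{lem:em-metric}, so that $\sum_y h(x,y) - \sum_y h(y,x) = S(x) - T(x)$ and $\sum_{x,y} h(x,y) d(x,y) = d_\EM(S,T)$. Let $M=\{\mu_x\}$ be any $(\dtoDtv)$-Lipschitz mapping into $\Delta(\{0,1\})$ achieving the bias. Writing
\[
\mu_S(0)-\mu_T(0) = \sum_{x,y} h(x,y)\,\mu_x(0) - \sum_{x,y} h(x,y)\,\mu_y(0) = \sum_{x,y} h(x,y)\bigl(\mu_x(0)-\mu_y(0)\bigr),
\]
and using that for two-point distributions $D_\tv(\mu_x,\mu_y)=|\mu_x(0)-\mu_y(0)|\le d(x,y)$, I bound the bias by $\sum_{x,y} h(x,y) d(x,y) = d_\EM(S,T)$. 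This gives (\ref{eq:tv2em}); reducing to $\Delta(\{0,1\})$ in the first place is already justified by Lemma~\ref{lem:zo}.

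For the lower bound (\ref{eq:em2tv}), I would take the LP dual of the Earthmover LP in the form of Lemma~\ref{lem:em-metric}. Assigning a multiplier $g(x)$ to the constraint indexed by $x$, the dual is
\[
\max \sum_{x\in V} g(x)\bigl(S(x)-T(x)\bigr) \quad \text{subject to} \quad g(x)-g(y)\le d(x,y)\ \ \forall x,y\mper
\]
Since $d$ is a metric (symmetric), the dual constraints are equivalent to saying that $g$ is $1$-Lipschitz with respect to $d$, so strong LP duality gives $d_\EM(S,T)=\max_g\sum_x g(x)(S(x)-T(x))$ over $1$-Lipschitz $g$. Let $g^\star$ be an optimizer. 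The objective is invariant under adding a constant to $g^\star$ (because $\sum_x(S(x)-T(x))=0$), so I shift $g^\star$ so that $\min_x g^\star(x)=0$; then since $d(x,y)\le 1$ for all $x,y$, the $1$-Lipschitz property forces $g^\star(x)\in[0,1]$ for every $x$. Defining $\mu_x(0)=g^\star(x)$ and $\mu_x(1)=1-g^\star(x)$ yields a valid randomized mapping into $\Delta(\{0,1\})$ that is $(\dtoDtv)$-Lipschitz, and $\mu_S(0)-\mu_T(0)=\sum_x g^\star(x)(S(x)-T(x))=d_\EM(S,T)$, so $\bias_{\dtoDtv}(S,T)\ge d_\EM(S,T)$.

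The main obstacle is the LP duality step: one has to check carefully that the dual really reduces to optimization over $1$-Lipschitz $g$, and that the final shift into $[0,1]$ goes through, which is exactly where the hypothesis $d(x,y)\le 1$ is used. Everything else is bookkeeping using linearity of $\mu_S$ and $\mu_T$ in $x$ and the fact that total variation on $\Delta(\{0,1\})$ reduces to the absolute difference of a single coordinate.
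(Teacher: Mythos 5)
Your proof is correct, and it takes a route that is dual to the paper's but equivalent. For the upper bound~(\ref{eq:tv2em}) you argue directly from the primal flow of Lemma~\ref{lem:em-metric}: writing $\mu_S(0)-\mu_T(0)=\sum_{x,y} h(x,y)\bigl(\mu_x(0)-\mu_y(0)\bigr)$ and bounding each term by $d(x,y)$. This is exactly the coupling argument the paper mentions in the remark following the theorem; the paper's main text instead plugs the flow into the \emph{dual} of the bias LP and invokes weak LP duality. Both are fine; your version avoids writing out the dual. For the lower bound~(\ref{eq:em2tv}) you dualize the Earthmover LP to obtain the Kantorovich--Rubinstein form $\max\{\sum_x g(x)(S(x)-T(x)) : g \text{ is } 1\text{-Lipschitz}\}$, shift $g$ so that $\min g = 0$, and use $d\le 1$ to land $g$ in $[0,1]$, giving a feasible Lipschitz classifier with $\mu_x(0)=g(x)$. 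The paper instead drops the normalization constraint $\mu_x(0)+\mu_x(1)=1$ from the bias LP, shows the relaxed value $\beta$ equals $\bias$ by the same shift-into-$[0,1]$ trick, and then identifies $\beta$ with $d_\EM$ by strong duality plus Lemma~\ref{lem:em-metric}. These two arguments are mirror images: you dualize the Earthmover side and the paper dualizes the bias side, with the same $d\le 1$ shifting lemma doing the work in the middle. Your phrasing in terms of $1$-Lipschitz potentials is arguably cleaner because it names the structure (Kantorovich--Rubinstein), whereas the paper's phrasing stays closer to the explicit flow LP that it reuses in the $D_\infty$ case.
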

\begin{proof}
The proof is by linear programming duality. We can express $\bias_{\dtoDtv}(S,T)$ as
the following linear program:
\begin{align*}
\bias(S,T) =\quad
\max  \quad& \sum_{x\in V}S(x)\mu_x(0)- \sum_{x\in V}T(x)\mu_x(0)  \\
\text{subject to}\quad
& \mu_x(0) -\mu_y(0) \le d(x,y) \\
& \mu_x(0) + \mu_x(1) = 1 \\
& \mu_x(a)\ge0
\end{align*}
Here, we used the fact that
\begin{equation*}
D_\tv(\mu_x,\mu_y)\le d(x,y)\quad\Longleftrightarrow\quad
\left|\mu_x(0)-\mu_y(0)\right|\le d(x,y)\mper
\end{equation*}
The constraint on the RHS is enforced in the linear program above by the two
constraints $\mu_x(0)-\mu_y(0)\le d(x,y)$ and $\mu_y(0)-\mu_x(0)\le d(x,y).$

We can now prove~(\ref{eq:tv2em}). Since $d$ is a metric, we can apply
Lemma~\ref{lem:em-metric}. Let $\{f(x,y)\}_{x,y\in V}$ be a solution to the LP
defined in Lemma~\ref{lem:em-metric}. By putting $\epsilon_x=0$ for all $x\in
V,$ we can extend this to a feasible solution to the LP defining $\bias(S,T)$
achieving the same objective value. Hence, we have $\bias(S,T)\le d_\EM(S,T).$

Let us now prove~(\ref{eq:em2tv}), using the assumption that $d(x,y)\le1.$
To do so, consider dropping the constraint that $\mu_x(0)+\mu_x(1)=1$ and
denote by $\beta(S,T)$ the resulting LP:
\begin{align*}
\beta(S,T) \defeq\quad\max\quad
& \sum_{x\in V}S(x)\mu_x(0)- \sum_{x\in V}T(x)\mu_x(0)  \\
\text{subject to}\quad & \mu_x(0) -\mu_y(0) \le d(x,y) \\
& \mu_x(0)\ge0
\end{align*}
It is clear that $\beta(S,T)\ge\bias(S,T)$ and we claim that in
fact $\bias(S,T)\ge\beta(S,T).$ To see this, consider any solution
$\{\mu_x(0)\}_{x\in V}$ to $\beta(S,T).$  Without changing the objective value
we may assume that $\min_{x\in V}\mu_x(0)=0.$ By our assumption that
$d(x,y)\le1$ this means that $\max_{x\in V}\mu_x(0)\le1.$ Now put
$\mu_x(1)=1-\mu_x(0)\in[0,1].$ This gives a solution to
$\bias(S,T)$ achieving the same objective value. We therefore have,
\[
\bias(S,T)=\beta(S,T)\mper
\]
On the other hand, by strong LP duality, we have
\begin{align*}
\beta(S,T)=\quad\min\quad & \sum_{x,y\in V} h(x,y)d(x,y)\\
\text{subject to}\quad
&  \sum_{y\in V} h(x,y) \ge \sum_{y\in V} h(y,x)+S(x)-T(x)\\
& h(x,y)\ge 0
\end{align*}
It is clear that in the first constraint we must have equality in any optimal
solution. Otherwise we can improve the objective value by decreasing some
variable $h(x,y)$ without violating any constraints.

Since $d$ is a metric we can now apply Lemma~\ref{lem:em-metric} to conclude
that $\beta(S,T)=d_\EM(S,T)$ and thus $\bias(S,T)=d_\EM(S,T).$
\end{proof}

\begin{remark}
Here we point out a different proof of the fact that
$\bias_{\dtoDtv}(S,T)\le d_\EM(S,T)$ which does not
involve LP duality. Indeed $d_\EM(S,T)$ can be interpreted as giving the
cost of the best \emph{coupling} between the two distributions $S$ and $T$
subject to the penalty function $d(x,y).$ Recall, a coupling is a distribution
$(X,Y)$ over $V\times V$ such that the marginal distributions are $S$ and $T,$
respectively. The cost of the coupling is $\E d(X,Y).$
It is not difficult to argue directly that any such coupling gives an upper
bound on $\bias_{\dtoDtv}(S,T).$ We chose the linear programming proof
since it leads to additional insight into the tightness of the theorem.
\end{remark}

The situation for $\bias_{\dtoDinfty}$ is somewhat more complicated and we
do not get a tight characterization in terms of an Earthmover distance. We do
however have the following upper bound.
\begin{lemma}
\begin{equation}
\bias_{\dtoDinfty}(S,T)\le \bias_{\dtoDtv}(S,T)
\end{equation}
\end{lemma}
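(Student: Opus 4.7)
The plan is to show that every $(\dtoDinfty)$-Lipschitz mapping is automatically $(\dtoDtv)$-Lipschitz, so the feasible region over which $\bias_{\dtoDinfty}(S,T)$ is maximized is contained in the feasible region for $\bias_{\dtoDtv}(S,T)$. Since both biases are maxima of the same linear objective $\mu_S(0)-\mu_T(0)$ over mappings into $\Delta(\{0,1\})$, containment of feasible sets immediately gives the stated inequality.

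First, I would invoke Lemma~\ref{lem:tv2inf}, which gives $D_\tv(P,Q)\le D_\infty(P,Q)$ for any two distributions $P,Q$. Applying this pointwise, if $M=\{\mu_x\}_{x\in V}$ satisfies $D_\infty(\mu_x,\mu_y)\le d(x,y)$ for all $x,y\in V$, then also $D_\tv(\mu_x,\mu_y)\le D_\infty(\mu_x,\mu_y)\le d(x,y)$. So $M$ is $(\dtoDtv)$-Lipschitz.

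Finally, by the definition of bias as a maximum over the class of Lipschitz mappings into $\Delta(\{0,1\})$, enlarging the class from $(\dtoDinfty)$-Lipschitz to $(\dtoDtv)$-Lipschitz can only increase the supremum of $\mu_S(0)-\mu_T(0)$. Thus $\bias_{\dtoDinfty}(S,T)\le \bias_{\dtoDtv}(S,T)$. There is no real obstacle here: the entire argument reduces to the one-line comparison $D_\tv\le D_\infty$ from Lemma~\ref{lem:tv2inf} together with monotonicity of the max under inclusion of feasible sets.
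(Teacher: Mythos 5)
Your argument is correct and is essentially identical to the paper's proof: both invoke Lemma~\ref{lem:tv2inf} to conclude that every $(\dtoDinfty)$-Lipschitz mapping is $(\dtoDtv)$-Lipschitz, so the $D_\tv$ bias is a relaxation of the $D_\infty$ bias. No differences worth noting.
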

\begin{proof}
By Lemma~\ref{lem:tv2inf}, we have $D_\tv(\mu_x,\mu_y)\le
D_\infty(\mu_x,\mu_y)$ for any two distributions $\mu_x,\mu_y.$
Hence, every $(\dtoDinfty)$-Lipschitz mapping is
also $(\dtoDtv)$-Lipschitz. Therefore, $\bias_{\dtoDtv}(S,T)$ is a
relaxation of $\bias_{\dtoDinfty}(S,T).$
\end{proof}
\begin{corollary}
\begin{equation}
\bias_{\dtoDinfty}(S,T)\le d_\EM(S,T)
\end{equation}
\end{corollary}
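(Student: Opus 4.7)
The corollary follows immediately by chaining the two preceding results. The plan is simply to combine the lemma that was just proven with the relevant direction of Theorem~\ref{thm:em-tv}.

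Specifically, I would first invoke the lemma above to get $\bias_{\dtoDinfty}(S,T) \le \bias_{\dtoDtv}(S,T)$. Then I would apply inequality~(\ref{eq:tv2em}) from Theorem~\ref{thm:em-tv}, which gives $\bias_{\dtoDtv}(S,T) \le d_\EM(S,T)$ for any metric $d$. Transitivity of $\le$ then yields the desired bound.

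There is no real obstacle here; the only thing worth noting is that the upper bound direction~(\ref{eq:tv2em}) of Theorem~\ref{thm:em-tv} holds without the assumption $d(x,y) \le 1$ (that assumption is only needed for the matching lower bound~(\ref{eq:em2tv})). So this corollary, like the upper bound it inherits from, is valid for arbitrary metrics $d$, whereas an analogous tight characterization of $\bias_{\dtoDinfty}$ in terms of an Earthmover-type distance is not obtained here.
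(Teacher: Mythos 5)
Your proof is correct and matches the paper's intent exactly: the corollary is stated immediately after the lemma $\bias_{\dtoDinfty}(S,T)\le\bias_{\dtoDtv}(S,T)$ precisely so that it follows by chaining with inequality~(\ref{eq:tv2em}) of Theorem~\ref{thm:em-tv}. Your observation that (\ref{eq:tv2em}) holds without the $d(x,y)\le 1$ assumption is also accurate and worth noting.
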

For completeness we note the dual linear program obtained from the definition
of~$\bias_{\dtoDinfty}(S,T):$

\begin{align}
\bias_{\dtoDinfty}(S,T)
=
\quad\min\quad & \sum_{x\in V} \epsilon_x \notag \\
\label{eq:f}
\text{subject to}\quad
& \sum_{y\in V}f(x,y)+\epsilon_x
\ge \sum_{y\in V}f(y,x)e^{d(x,y)}+S(x)-T(x) \\
\label{eq:g}
& \sum_{y\in V}g(x,y)+\epsilon_x
\ge \sum_{y\in V}g(y,x)e^{d(x,y)} \\
& f(x,y),g(x,y)\ge0 \notag
\end{align}
Similar to the proof of Theorem~\ref{thm:em-tv}, we may interpret this program as
a \emph{flow} problem.  The variables $f(x,y), g(x,y)$ represent a
nonnegative \emph{flow} from $x$ to $y$ and
$\epsilon_x$ are slack variables. Note that the variables $\epsilon_x$ are
unrestricted as they correspond to an equality constraint. The first
constraint requires that $x$ has at least $S(x)-T(x)$ outgoing units of
flow in~$f.$ The
RHS of the constraints states that the penalty for \emph{receiving} a unit of
flow from~$y$ is $e^{d(x,y)}.$ However, it is no longer clear that we can
get rid of the variables $\epsilon_x,g(x,y).$

\begin{open}
Can we achieve a tight characterization of when $(D_\infty,d)$-Lipschitz
implies statistical parity?
\end{open}

\section{Fair Affirmative Action}
\label{sec:preferential}


In this section, we explore how to implement what may be called \emph{fair
affirmative action}. Indeed, a typical question when we discuss fairness
is, \emph{``What if we want to ensure statistical parity between two groups
$S$ and $T,$ but members of $S$ are less likely to be ``qualified''?}
In
Section~\ref{sec:lip-sp}, we have seen that when $S$ and $T$  are
``similar" then the Lipschitz condition implies statistical parity. Here we
consider the complementary case where $S$ and  $T$ are very different and
imposing statistical parity corresponds to preferential  treatment.
This is a cardinal question, which we examine with
a concrete example illustrated in Figure~\ref{fig:example1}.

\begin{figure}[h]
\begin{center}
\includegraphics[scale=0.75]{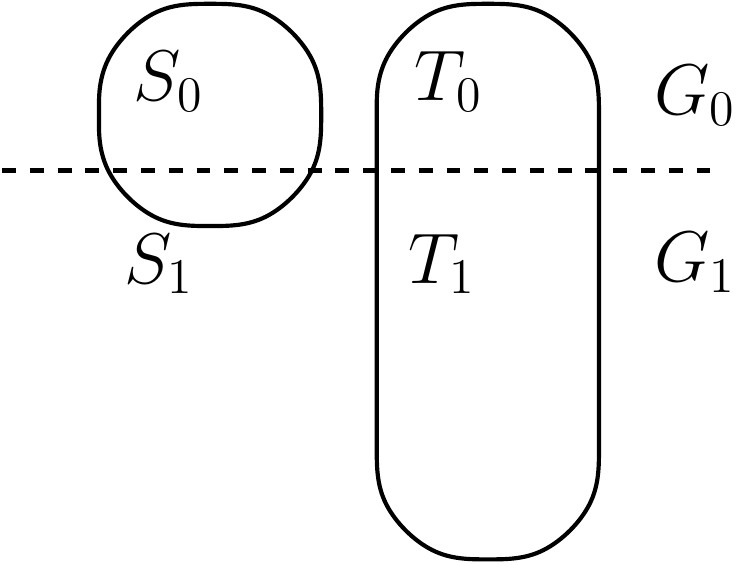}
\end{center}
\caption{$S_0=G_0\cap S$, $T_0=G_0\cap T$}
\label{fig:example1}
\end{figure}

For simplicity, let $T = \Scomp$.
Assume $|S|/|T \cup S| = 1/10$,
so $S$ is only $10\%$ of the population.
Suppose that our task-specific metric partitions $S \cup T$ into two groups,
call them $G_0$ and $G_1$, where
members of $G_i$ are very close to one another and very far from
all members of $G_{1-i}$.  Let $S_i$, respectively $T_i$, denote the
intersection $S \cap G_i$, respectively $T \cap G_i$, for $i = 0,1$.
Finally, assume $|S_0| = |T_0| = 9|S|/10$.
Thus, $G_0$ contains less than $20\%$ of the total population, and is equally
divided between $S$ and $T$.

The Lipschitz condition requires that members of each $G_i$ be treated
similarly to one another, but there is no requirement that members of
$G_0$ be treated similarly to members of $G_1$.  The treatment of
members of $S$, on average, may therefore be very different from the
treatment, on average, of members of $T$, since members of $S$ are
over-represented in $G_0$ and under-represented in $G_1$. Thus the
Lipschitz condition says nothing about statistical parity in this
case.

Suppose the members of $G_i$ are to be shown an advertisement $\ad_i$
for a loan offering, where the terms in $\ad_1$ are superior to those in
$\ad_0$.  Suppose further that the distance metric has partitioned the
population according to (something correlated with) credit score, with
those in $G_1$ having higher scores than those in $G_0$.

On the one hand, this seems fair: people with better ability to repay
are being shown a more attractive product. Now we ask two questions:
``What is the effect of imposing statistical parity?'' and
``What is the effect of failing to impose statistical parity?''

\paragraph{Imposing Statistical Parity.}
Essentially all of $S$ is in $G_0$, so for simplicity let us suppose
that indeed $S_0 = S \subset G_0$.  In this case, to ensure that members
of $S$ have comparable chance of seeing $\ad_1$ as do members of $T$,
members of $S$ must be treated, for the most part, like those in
$T_1$.  In addition, by the Lipschitz condition, members of $T_0$ must
be treated like members of $S_0 = S$, so these, also, are treated like
$T_1$, and the space essentially collapses, leaving only trivial
solutions such as assigning a fixed probability distribution on the
advertisements $(\ad_0, \ad_1)$ and showing ads according to this
distribution to each individual, or showing all individuals $\ad_i$
for some fixed~$i$.  However, while fair (all individuals are treated
identically), these solutions fail to take the \vendor 's loss
function into account.

\paragraph{Failing to Impose Statistical Parity.}
The demographics of the groups $G_i$ differ from the demographics of the
general population.  Even though half the individuals shown $\ad_0$ are
members of $S$ and half are members of $T$, this in turn can cause a problem
with fairness: an ``anti-$S$'' \vendor{} can effectively eliminate most
members of $S$ by replacing the ``reasonable'' advertisement $\ad_0$ offering
less good terms, with a blatantly hostile message designed to drive away
customers.  This eliminates essentially all business with members of $S$,
while keeping intact most business with members of $T$.  Thus, if members of
$S$ are relatively far from the members of $T$ according to the distance
metric, then satisfying the Lipschitz condition may fail to prevent some of
the unfair practices.

\subsection{An alternative optimization problem}
\label{sec:airport}

With the above discussion in mind, we now suggest a different approach, in
which we insist on statistical parity, but we relax the Lipschitz condition
between elements of $S$ and elements of $\Scomp$. This is consistent with the
essence of preferential treatment, which implies that elements in $S$ are
treated differently than elements in $T$.  The approach is inspired by the use
of the Earthmover relaxation in the context of metric labeling and
0-extension~\cite{KleinbergTa02,ChekuriKNZ04}.  Relaxing the $S \times T$
Lipschitz constraints also makes sense if the information about the distances
between members of $S$ and members of $T$ is of lower quality, or less
reliable, than the internal distance information within these two sets.

We proceed in two steps:
\begin{enumerate}
\item
\begin{enumerate}
\item
First we compute a mapping from elements in $S$ to distributions over
$T$ which transports the uniform distribution over $S$ to the uniform
distribution over $T$, while minimizing the total distance traveled.
Additionally the mapping preserves the Lipschitz condition between
elements within $S.$
\item
This mapping
gives us the following new loss function for elements of~$T$:
For $y \in T$ and $a \in A$ we define a new loss, $L'(y,a)$, as
\[
L'(y,a) = \sum_{x \in S} \mu_x(y)L(x,a) + L(y,a)\mcom
\]
where $\{\mu_x\}_{x\in S}$ denotes the mapping computed in step (a).
$L'$ can be viewed as a reweighting of the loss function $L$,
taking into account the loss on $S$ (indirectly through its mapping to $T$).

\end{enumerate}
\item Run the Fairness~LP only on $T$, using the new loss function~$L'$.
\end{enumerate}
Composing these two steps yields a
a mapping from $V=S\cup T$ into~$A.$

\medskip
\noindent
Formally, we can express the first step of this alternative approach
as a restricted
Earthmover problem defined as
\begin{align}
\label{eq:EM+L}
d_{\EM+\mathrm{L}}(S,T)\defeq\quad\min\quad
& \E_{x\in S}\E_{y\sim\mu_x} d(x,y)
\\
\text{subject to}\quad
& D(\mu_x,\mu_{x'}) \le d(x,x')   
\quad \text{for all}\quad x,x'\in S \notag\\
& D_\tv(\mu_S,U_T)\le\epsilon  \notag\\
& \mu_x\in \Delta(T) \quad \text{for all}\quad{x \in S}
\notag
\end{align}
Here, $U_T$ denotes the uniform distribution over~$T.$
Given $\{\mu_x\}_{x\in S}$ which
minimizes~(\ref{eq:EM+L}) and $\{\nu_x\}_{x\in T}$ which minimizes the
original fairness LP~(\ref{eq:fairness-lp}) restricted to $T,$
we define the mapping
$M\colon V\to\Delta(A)$ by putting
\begin{equation}\label{eq:composition}
M(x)=
\begin{cases}
\nu_x & x\in T\\
\E_{y\sim\mu_x}\nu_y & x\in S
\end{cases}\mper
\end{equation}

Before stating properties of the mapping $M$ we make some
remarks.

\begin{enumerate}
\item
Fundamentally, this new approach shifts from
minimizing loss, subject to the Lipschitz constraints,
to minimizing loss and disruption of $S \times T$ Lipschitz requirement,
subject to the parity and $S \times S$ and $T \times T$ Lipschitz constraints.
This gives us a bicriteria optimization problem, with a wide range of
options.

\item
We also have some flexibility even in the current version.  For
example, we can eliminate the re-weighting, prohibiting the \vendor{}
from expressing any opinion about the fate of elements in $S$.  This
makes sense in several settings.  For example, the \vendor{} may {\em
  request} this due to ignorance ({\it e.g.}, lack of market research)
about~$S$, or the \vendor{} may have some (hypothetical) special legal
status based on past discrimination against~$S$.

\item
It is instructive to compare the alternative approach
to a modification of the Fairness~LP in which
we enforce statistical parity and eliminate the Lipschitz
requirement on $S \times T$.
The alternative approach is more faithful to the $S \times T$ distances,
providing protection against the self-fulfilling
prophecy discussed in the Introduction, in which the  \vendor{}
deliberately selects the ``wrong'' subset of~$S$ while still
maintaining statistical parity.

\item
A related approach to addressing preferential treatment involves {\em
  adjusting} the metric in such a way that the Lipschitz condition
will imply statistical parity.  This coincides with at least one philosophy
behind affirmative action: that the metric does not fully reflect
potential that may be undeveloped
because of unequal access to resources.  Therefore, when we consider
one of the strongest individuals in $S$, affirmative action suggests
it is more appropriate to consider this individual as similar to one
of the strongest individuals of $T$ (rather than to an individual of
$T$ which is close according to the original distance metric).  In
this case, it is natural to adjust the distances between elements in
$S$ and $T$ rather than inside each one of the populations (other than
possibly re-scaling). This gives rise to a family of optimization
problems:
\begin{quote}
Find a new distance metric $d'$ which ``best approximates" $d$ under
the condition that $S$ and $T$ have small Earthmover distance under
$d'$,
\end{quote}
where we have the flexibility of choosing the measure
of quality to how well $d'$ approximates $d$.
\end{enumerate}

\noindent
Let $M$ be the mapping of Equation~\ref{eq:composition}.  The
following properties of~$M$ are easy to verify.
\begin{proposition}\label{pro:lip}
The mapping $M$ defined in~(\ref{eq:composition})
satisfies
\begin{enumerate}
\item statistical parity between $S$ and $T$ up to bias~$\epsilon,$
\item the Lipschitz condition for every pair $(x,y)\in (S\times
S)\cup(T\times T).$
\end{enumerate}
\end{proposition}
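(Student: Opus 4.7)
The plan is to derive both properties from the data-processing / post-processing property of $D$ already recorded in the paper, since the mapping $M$ is literally a composition of two Lipschitz (or parity-preserving) pieces.

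For the Lipschitz claim (part 2), I would handle the two cases separately. For $(x,y)\in T\times T$, $M(x)=\nu_x$ and $M(y)=\nu_y$, so the inequality $D(M(x),M(y))\le d(x,y)$ is just the feasibility of $\{\nu_x\}_{x\in T}$ in the Fairness~LP restricted to $T$. For $(x,y)\in S\times S$, I would observe that $M|_S$ factors as the composition $S \xrightarrow{\mu} \Delta(T) \xrightarrow{\nu} \Delta(A)$: the first stage $x\mapsto \mu_x$ is $(D,d)$-Lipschitz on $S$ by the first constraint of the Earthmover+Lipschitz program (\ref{eq:EM+L}), and the second stage $y\mapsto \nu_y$ is a randomized function from $T$ to $A$. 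The paper's post-processing remark (just before Section~2.2) gives directly that such a composition remains $(D,d)$-Lipschitz, yielding $D(M(x),M(y))\le D(\mu_x,\mu_y)\le d(x,y)$.

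For statistical parity (part 1), I would unfold the definitions to get
\begin{equation*}
\mu^M_S(a)=\E_{x\sim S}\E_{y\sim \mu_x}\nu_y(a)=\E_{y\sim \mu_S}\nu_y(a),
\qquad
\mu^M_T(a)=\E_{y\sim U_T}\nu_y(a),
\end{equation*}
so that $\mu^M_S$ and $\mu^M_T$ are the pushforwards of $\mu_S$ and $U_T$ under the same randomized map $y\mapsto \nu_y$. By the same post-processing principle applied to $D_\tv$ (equivalently, the data-processing inequality, which for $D_\tv$ follows from Fact~\ref{fac:tvconv} by writing each pushforward as a convex combination), the map $P\mapsto \E_{y\sim P}\nu_y$ is a contraction in $D_\tv$. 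Combined with the second constraint of (\ref{eq:EM+L}) this yields
\begin{equation*}
D_\tv(\mu^M_S,\mu^M_T)\le D_\tv(\mu_S,U_T)\le \epsilon,
\end{equation*}
which is exactly statistical parity up to bias $\epsilon$.

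The whole proof is essentially a bookkeeping exercise; the only real content is recognizing that (\ref{eq:composition}) is a post-composition and invoking post-processing twice, once with $D$ (which might be $D_\infty$ or $D_\tv$) for the Lipschitz bounds on $S\times S$ and once with $D_\tv$ for the parity bound. The mildly delicate point—and the one I would be most careful to state—is that post-processing contracts $D_\tv$ regardless of which $D$ is used inside the Earthmover+Lipschitz LP, so the parity conclusion genuinely rides on the second constraint of (\ref{eq:EM+L}) and not on the first. No other machinery is required.
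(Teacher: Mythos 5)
Your proof is correct and is essentially identical to the paper's: the paper's two displayed inequalities are exactly the post-processing contractions you identify, namely $D_\tv(\mu^M_S,\mu^M_T)\le D_\tv(\mu_S,U_T)$ for part~1 and $D(M(x),M(y))\le D(\mu_x,\mu_y)$ for $(x,y)\in S\times S$ in part~2, followed by the feasibility constraints of~(\ref{eq:EM+L}). You simply make explicit the post-processing observation (and its application to $D_\tv$ via Fact~\ref{fac:tvconv}) that the paper's proof leaves implicit.
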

\begin{proof}
The first property follows since
\[
D_\tv(M(S),M(T))=D_\tv\left(\E_{x\in S}\E_{y\sim\mu_x}\nu_y,\E_{x\in
T}\nu_x\right)
\le D_\tv(\mu_S,U_T)\le\epsilon.
\]
The second claim is trivial for $(x,y)\in T\times T.$ So, let $(x,y)\in S\times
S.$ Then,
\[
D(M(x),M(y))
\le D(\mu_x,\mu_y)
\le d(x,y)\mper
\]
\end{proof}

We have given up the Lipschitz condition between $S$ and $T$, instead
relying on the terms $d(x,y)$ in the objective function to
discourage mapping $x$ to distant $y$'s. It turns out that the Lipschitz
condition between elements $x\in S$ and $y\in T$ is still maintained on
average and that the expected violation is given by $d_{\EM+\mathrm{L}}(S,T)$
as shown next.
\begin{proposition}
Suppose $D=D_\tv$ in~(\ref{eq:EM+L}). Then, the resulting mapping $M$
satisfies
\[
\E_{x\in S}
\max_{y\in T}
\Big[
D_\tv(M(x),M(y))- d(x,y)
\Big]\le
d_{\EM+\mathrm{L}}(S,T)\mper
\]
\end{proposition}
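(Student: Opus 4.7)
The plan is to unpack $M(x)$ for $x \in S$ using the formula in~(\ref{eq:composition}), then apply convexity of $D_\tv$ to move the expectation over $y' \sim \mu_x$ outside the distance, use the Lipschitz property of $\{\nu_y\}_{y \in T}$ already guaranteed by Proposition~\ref{pro:lip}, and finish with the triangle inequality for the metric $d$. The key observation is that once we apply the triangle inequality, the resulting upper bound no longer depends on $y$, so the maximum over $y \in T$ is trivial and we directly recognize the definition of $d_{\EM+\mathrm{L}}(S,T)$.

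Concretely, for $x \in S$ and $y \in T$, write
\[
D_\tv(M(x),M(y)) = D_\tv\!\left(\E_{y' \sim \mu_x}\nu_{y'},\, \nu_y\right)
\le \E_{y' \sim \mu_x} D_\tv(\nu_{y'},\nu_y),
\]
where the inequality is Fact~\ref{fac:tvconv} (iterated, or equivalently convexity of $D_\tv$ in one argument). Next, since $\{\nu_y\}_{y \in T}$ is the output of the Fairness~LP on $T$, it is $(D_\tv,d)$-Lipschitz on $T \times T$, so $D_\tv(\nu_{y'},\nu_y) \le d(y',y)$. Applying the triangle inequality $d(y',y) \le d(y',x) + d(x,y)$ yields
\[
D_\tv(M(x),M(y)) - d(x,y) \le \E_{y' \sim \mu_x} d(x,y').
\]

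The right-hand side is independent of $y$, so taking the maximum over $y \in T$ preserves the bound. Finally, taking expectation over $x \in S$ gives
\[
\E_{x \in S} \max_{y \in T}\Big[D_\tv(M(x),M(y)) - d(x,y)\Big]
\le \E_{x \in S} \E_{y' \sim \mu_x} d(x,y') = d_{\EM+\mathrm{L}}(S,T),
\]
by the objective of~(\ref{eq:EM+L}). There is no real obstacle here: the only nontrivial ingredient is recognizing that the $D=D_\tv$ hypothesis lets us invoke convexity of statistical distance, and that combined with the triangle inequality the dependence on $y$ collapses cleanly into the Earthmover-plus-Lipschitz objective.
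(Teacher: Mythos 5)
Your proof is correct and follows essentially the same route as the paper's: unpack $M(x)$ via Equation~\eqref{eq:composition}, apply convexity of $D_\tv$ (Fact~\ref{fac:tvconv}), invoke the $T\times T$ Lipschitz property from Proposition~\ref{pro:lip}, and finish with the triangle inequality. The only cosmetic difference is that you explicitly note the final bound is independent of $y$ before taking the max and the expectation, which the paper leaves implicit.
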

\begin{proof}
For every $x\in S$ and $y\in T$ we have
\begin{align*}
D_\tv(M(x),M(y))
& = D_\tv\left(\E_{z\sim\mu_x}M(z),M(y)\right)\\
&\le \E_{z\sim\mu_x}D_\tv\left(M(z),M(y)\right)
\tag{by Fact~\ref{fac:tvconv}}\\
&\le \E_{z\sim\mu_x}d(z,y)
\tag{Proposition~\ref{pro:lip} since $z,y\in T$}\\
&\le d(x,y)+\E_{z\sim\mu_x} d(x,z) \tag{by triangle inequalities}
\end{align*}
The proof is completed by taking the expectation over $x\in S.$
\end{proof}

An interesting challenge for future work is handling preferential
treatment of multiple protected subsets that are {\it not} mutually
disjoint. The case of disjoint subsets seems easier and in particular
amenable to our approach.


\section{Small loss in bounded doubling dimension}
\label{sec:expmechanism}

The general LP shows that given an instance $\cI$, it is possible to find an
``optimally fair'' mapping in polynomial time. The result however does not
give a concrete quantitative bound on the resulting loss.  Further, when the
instance is very large, it is desirable to come up with more efficient methods
to define the mapping.

We now give a fairness mechanism for which we can prove a bound on the loss
that it achieves in a natural setting. Moreover, the mechanism is
significantly more efficient than the general linear program.
Our mechanism is based on the exponential mechanism~\cite{McSherryTa07}, first
considered in the context of differential privacy.

We will describe the method in the natural setting where the mapping $M$ maps
elements of $V$ to distributions over $V$ itself. The method could be
generalized to a different set~$A$ as long as we also have a distance function
defined over $A$ and some distance preserving embedding of $V$ into $A$.  A
natural loss function to minimize in the setting where $V$ is mapped into
distributions over $V$ is given by the metric~$d$ itself. In this setting we
will give an explicit Lipschitz mapping and show that under natural
assumptions on the metric space $(V,d)$ the mapping achieves small loss.

\begin{definition}
Given a metric $d\colon V\times V\to\R$ the exponential mechanism
$\ExM\colon V\to\Delta(V)$ is defined by putting
\[
\ExM(x)\defeq [Z_x^{-1}e^{-d(x,y)}]_{y\in V}\mcom
\]
where $Z_x=\sum_{y\in V}e^{-d(x,y)}\mper$
\end{definition}

\begin{lemma}[\cite{McSherryTa07}]
The exponential mechanism is $(\dtoDinfty)$-Lipschitz.
\end{lemma}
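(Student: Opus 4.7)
The plan is to verify the Lipschitz bound directly by writing out the pointwise probability ratio and applying the triangle inequality twice. Fix arbitrary $x,x'\in V$; we need to show $D_\infty(\ExM(x),\ExM(x'))\le d(x,x')$, i.e.\ that for every $y\in V$ the log-ratio $\log(\ExM(x)(y)/\ExM(x')(y))$ has absolute value at most $d(x,x')$. By symmetry it suffices to bound the ratio from above.

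The first step is to decompose the ratio into an ``exponent part'' and a ``normalizer part'':
\[
\frac{\ExM(x)(y)}{\ExM(x')(y)}
= \frac{Z_{x'}}{Z_x}\cdot e^{\,d(x',y)-d(x,y)}\mper
\]
The second factor is controlled immediately by the triangle inequality $d(x',y)\le d(x,x')+d(x,y)$, giving $e^{d(x',y)-d(x,y)}\le e^{d(x,x')}$. For the ratio of normalizers, I would apply the triangle inequality termwise: for each $y'\in V$,
\[
e^{-d(x',y')} \le e^{d(x,x')}\, e^{-d(x,y')}\mcom
\]
since $d(x,y')\le d(x,x')+d(x',y')$. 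Summing over $y'$ yields $Z_{x'}\le e^{d(x,x')} Z_x$.

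Combining the two factors gives $\ExM(x)(y)/\ExM(x')(y)\le e^{2d(x,x')}$. To obtain the Lipschitz constant $1$ claimed in the lemma (rather than $2$), the cleanest route is to observe that one only needs to pay for the triangle inequality once: absorbing the factor into the metric (equivalently, the mechanism as stated is a reformulation in which the exponent $d(x,y)$ already incorporates the usual $1/2$ of the exponential mechanism), or else restating the bound with the factor of $2$ as in the original exponential mechanism analysis of McSherry--Talwar. Either way, the same two-line triangle-inequality argument is the heart of the proof.

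There is no real obstacle here beyond keeping the direction of the inequalities straight; the argument is entirely analogous to the original differential privacy analysis, which is reasonable since, as noted in Section~\ref{sec:privacy}, $(D_\infty,d)$-Lipschitz is exactly the generalization of $\epsilon$-differential privacy to arbitrary metrics.
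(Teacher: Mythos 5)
Your decomposition into the exponent factor $e^{d(x',y)-d(x,y)}$ and the normalizer ratio $Z_{x'}/Z_x$, each controlled by one application of the triangle inequality, is exactly the standard McSherry--Talwar calculation, and every step of the derivation is correct. The place where you go astray is the suggestion that one might only need to ``pay for the triangle inequality once'' to recover a constant of $1$ for the mechanism as written: that is false. The two factors are governed by \emph{different} triangle inequalities (one involving the particular output $y$, the other the entire sum over $V$), and both can exceed $1$ simultaneously. Concretely, take $V=\{0,1,2\}$ with $d(i,j)=|i-j|$; then $Z_0=1+e^{-1}+e^{-2}\approx 1.503$, $Z_1=1+2e^{-1}\approx 1.736$, and
\[
\frac{\ExM(0)(0)}{\ExM(1)(0)}=e\cdot\frac{Z_1}{Z_0}\approx 3.14 > e = e^{d(0,1)}\mcom
\]
so $D_\infty(\ExM(0),\ExM(1))>d(0,1)$ and the lemma with Lipschitz constant $1$ is literally false for the mechanism $\ExM(x)\propto e^{-d(x,y)}$.

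The constant $2$ you derived is therefore the truth, and the right resolution is the other one you offered: define $\ExM(x)\propto e^{-d(x,y)/2}$ (the $1/2$ that \cite{McSherryTa07} themselves put in the exponent is there precisely to absorb this doubling), or equivalently state the conclusion as $(D_\infty,2d)$-Lipschitz. The paper gives no proof of this lemma, and none of the downstream results in Section~\ref{sec:expmechanism} are sensitive to the constant---Theorem~\ref{thm:expmech} bounds expected loss by $O(1)$, and Theorem~\ref{thm:lb} is a lower bound up to unspecified constants---so the slip is benign. But your instinct to flag the mismatch rather than paper over it was the right one; just drop the ``pay only once'' alternative, since it cannot be made to work.
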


One cannot in general expect the exponential mechanism to achieve small loss.
However, this turns out to be true in the case where $(V,d)$ has small
\emph{doubling dimension}. It is important to note that in differential
privacy, the space of databases does \emph{not} have small doubling dimension.
The situation in fairness is quite different. Many metric spaces arising in
machine learning applications do have bounded doubling dimension. Hence the
theorem that we are about to prove applies in many natural settings.

\begin{definition}
The \emph{doubling dimension} of a metric space $(V,d)$ is the smallest
number~$k$ such that for every $x\in V$ and every $R\ge0$ the ball of radius
$R$ around $x,$ denoted $B(x,R)=\{y\in V\colon d(x,y)\le R\}$ can be covered
by $2^k$ balls of radius $R/2.$
\end{definition}

We will also need that points in the metric space are not too close together.

\begin{definition}
We call a metric space~$(V,d)$ \emph{well separated} if there is a positive
constant $\epsilon>0$ such that $|B(x,\epsilon)|=1$ for all $x\in V.$
\end{definition}

\begin{theorem}\label{thm:expmech}
Let $d$ be a well separated metric space of bounded doubling dimension.
Then the exponential mechanism satisfies
\[
\E_{x\in V}\E_{y\sim \ExM(x)}d(x,y)= O(1)\mper
\]
\end{theorem}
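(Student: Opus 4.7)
The plan is to prove a pointwise bound $\mathbb{E}_{y\sim \ExM(x)}\,d(x,y) = O(1)$ that holds for every $x\in V$; averaging over $x$ then gives the theorem. The ratio to bound is
\[
\mathbb{E}_{y\sim \ExM(x)}\,d(x,y) \;=\; \frac{\sum_{y\in V} d(x,y)\,e^{-d(x,y)}}{\sum_{y\in V} e^{-d(x,y)}}\mper
\]
The denominator is at least $1$ from the $y=x$ term, so the whole task is to bound the numerator by $O(1)$.

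The first step is a polynomial volume estimate: using doubling dimension $k$ and well-separation constant $\varepsilon>0$, I would show $|B(x,R)| \le C\,(R/\varepsilon)^k$ for a constant $C=C(k)$ and all $R\ge\varepsilon$. Iterating the doubling property $t$ times covers $B(x,R)$ by $2^{kt}$ balls of radius $R/2^t$. Choose the least $t$ with $R/2^t<\varepsilon/2$; then any two points of $V$ in the same small ball are within distance $2\cdot(R/2^t)<\varepsilon$ of each other, which by well-separation forces them to be equal. Hence each small ball contains at most one point of $V$, so $|B(x,R)|\le 2^{kt} = O((R/\varepsilon)^k)$.

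The second step is to partition $V$ into annuli $A_i=\{y\in V\colon i\le d(x,y)<i+1\}$ for $i=0,1,2,\ldots$. By the first step, $|A_i|\le |B(x,i+1)| = O((i+1)^k)$. Then
\[
\sum_{y\in V} d(x,y)\,e^{-d(x,y)}
\;\le\; \sum_{i\ge 0}(i+1)\,e^{-i}\,|A_i|
\;=\; O\!\left(\sum_{i\ge 0}(i+1)^{k+1}\,e^{-i}\right)
\;=\; O(1)\mcom
\]
since polynomial growth in $i$ is dominated by exponential decay. Combined with the trivial lower bound $Z_x\ge 1$, this yields $\mathbb{E}_{y\sim \ExM(x)}d(x,y)=O(1)$ uniformly in $x$, and the theorem follows.

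The main obstacle is the volume bound in the first step: one must be careful that iterating the doubling property really gives covers by balls whose radii are small enough to invoke well-separation, and that the dependence on $\varepsilon$ and $k$ is absorbed into the $O(1)$ constant (which is legitimate since both are treated as fixed parameters of the metric space). Once that is in hand, the annular decomposition and geometric-series bookkeeping are routine.
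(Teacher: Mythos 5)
Your proof is correct, and it takes a genuinely different route from the paper's. The paper invokes a result of Chan and Gupta that controls the \emph{average} volume, $\E_{x\in V}|B(x,2R)|\le 2^{O(k)}\E_{x\in V}|B(x,R)|$, combines this with well-separation to get $\E_{x\in V}|B(x,1)|=2^{O(k)}$, and then carries the expectation over $x$ through an integral over radii, so that only the mean loss is controlled. You instead establish a \emph{pointwise} polynomial volume bound: iterating the covering definition of doubling dimension until the small balls have radius below $\varepsilon/2$, and then using well-separation so that each tiny ball holds at most one point of $V$, gives $|B(x,R)|\le (O(R/\varepsilon))^{O(k)}$ for \emph{every} $x$. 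Your annular decomposition then bounds $\E_{y\sim\ExM(x)}d(x,y)$ by a fixed constant uniformly in $x$, and averaging is a triviality. This is cleaner in two respects: it is self-contained (no citation needed), and it actually proves a stronger statement (a uniform pointwise bound rather than a bound in expectation over $x$). One small technical remark: the iteration of the covering property requires the covering balls to be centered at points of $V$ so that the doubling property can be re-applied; this is the standard convention, and with the alternative convention one pays at most a constant factor in the exponent, so the $O(1)$ conclusion is unaffected. The paper's proof, by contrast, never needs the pointwise packing bound because the Chan--Gupta inequality already lives at the level of averages, which is all the theorem requires---but given well-separation and bounded doubling dimension, the pointwise bound is available and makes the argument shorter.
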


\begin{proof}
Suppose $d$ has doubling dimension~$k.$
It was shown in~\cite{ChanGu08} that doubling dimension~$k$ implies for every
$R\ge0$ that
\begin{equation}\label{eq:ball-avg}
\E_{x\in V} |B(x,2R)|
\le 2^{k'}\E_{x\in V} |B(x,R)|\mcom
\end{equation}
where $k'=O(k).$
It follows from this condition and the assumption on $(V,d)$
that for some positive $\epsilon>0,$
\begin{equation}\label{eq:ball-1}
\E_{x\in V}|B(x,1)|\le
\left(\frac1\epsilon\right)^{k'}\E_{x\in V}|B(x,\epsilon)|= 2^{O(k)}\mper
\end{equation}
Then,
\begin{align*}
\E_{x\in V}\E_{y\sim \ExM(x)}  d(x,y)
& \le 1 +\E_{x\in V}\int_1^\infty \frac{re^{-r}}{Z_x}|B(x,r)|\rd r\\
& \le 1 +\E_{x\in V}\int_1^\infty re^{-r}|B(x,r)|\rd r \tag{since $Z_x\ge
e^{-d(x,x)}=1$}\\
& = 1 + \int_1^\infty re^{-r} \E_{x\in V}|B(x,r)|\rd r\\
& \le 1 + \int_1^\infty re^{-r} r^{k'}\E_{x\in V}|B(x,1)|\rd r
\tag{using~(\ref{eq:ball-1})}\\
& \le 1+ 2^{O(k)}\int_0^\infty r^{k'+1}e^{-r} \rd r\\
& \le 1+ 2^{O(k)}(k'+2)!\mper
\end{align*}
As we assumed that $k=O(1),$ we conclude
\[
\E_{x\in V}\E_{y\sim \ExM(x)} d(x,y)
\le 2^{O(k)}(k'+2)! \le O(1)\mper
\]
\end{proof}
\begin{remark}\label{rem:not-separated}
If $(V,d)$ is not well-separated, then for every constant $\epsilon>0,$ it
must contain a well-separated subset
$V'\subseteq V$ such that every point $x\in V$ has a neighbor $x'\in V'$ such
that $d(x,x')\le\epsilon.$ A Lipschitz mapping $M'$ defined on $V'$ naturally
extends to all of~$V$ by putting $M(x)=M'(x')$
where $x'$ is the nearest neighbor of $x$ in $V'.$
It is easy to see that the expected loss of $M$ is only an additive
$\epsilon$ worse than that of $M'.$ Similarly, the Lipschitz condition
deteriorates by an additive~$2\epsilon,$ i.e., $D_\infty(M(x),M(y))\le
d(x,y)+2\epsilon\mper$ Indeed, denoting the nearest neighbors in $V'$
of $x,y$ by $x',y'$ respectively, we have
$D_\infty(M(x),M(y))=D_\infty(M'(x'),M'(y'))\le d(x',y')\le
d(x,y)+d(x,x')+d(y,y')\le d(x,y)+2\epsilon.$ Here, we used the triangle
inequality.
\end{remark}
The proof of Theorem~\ref{thm:expmech} shows an exponential dependence on the doubling
dimension~$k$ of the underlying space in the error of the
exponential mechanism.
The next theorem shows that the loss of any Lipschitz mapping has
to scale at least linearly with $k.$ The proof follows from a packing argument
similar to that in~\cite{HardtTa10}.
The argument is slightly
complicated by the fact
that we need to give a lower bound on the \emph{average} error
(over $x\in V$) of any mechanism.
\begin{definition}
A set $B\subseteq V$ is called an \emph{$R$-packing} if $d(x,y)>R$
for all $x,y\in B.$
\end{definition}
Here we give a lower bound using a metric space that may not be
well-separated. However, following Remark~\ref{rem:not-separated},
this also shows that any mapping defined on a well-separated subset of the
metric space must have large error up to a small additive loss.
\begin{theorem}\label{thm:lb}
For every~$k\ge2$ and every large enough
$n\ge n_0(k)$ there exists an  $n$-point metric space of
doubling dimension $O(k)$ such that
any $(\dtoDinfty)$-Lipschitz mapping~$M\colon V\to\Delta(V)$ must satisfy
\[
\E_{x\in V}\E_{y\sim M(x)}d(x,y)\ge \Omega(k)\mper
\]
\end{theorem}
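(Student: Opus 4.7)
The plan is to adapt the packing lower bound from differential privacy (of the sort in~\cite{HardtTa10}) to our more general metric setting. The core idea is that if $V$ contains a large set of well-separated points all sitting inside a region of bounded diameter, then the $(D_\infty,d)$-Lipschitz condition forces the distributions $M(x)$ at different $x$ to be within an $e^{\text{diameter}}$ factor of one another, which in turn means no single $M(x)$ can place large mass near its own $x$---otherwise the $M(x)$'s would collectively exceed probability mass~$1$ somewhere.

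For the metric space, I would take $V = C$ to be a binary code obtained from Gilbert--Varshamov: for a small constant $\delta > 0$ one can find $C \subseteq \{0,1\}^k$ with $|C| \ge 2^{(1 - H(\delta))k}$ and pairwise Hamming distance at least $\delta k$. Equip $V$ with the Hamming metric scaled by a small constant $\alpha > 0$, so that all nonzero distances in $V$ lie in the range $[R, D]$ with $R = \alpha \delta k$ and $D \le \alpha k$. The doubling dimension of $V$ is at most $\log_2 |V| \le k = O(k)$, as required. Values of $n$ not of this exact form can be handled by padding with well-separated extra points (in the spirit of Remark~\ref{rem:not-separated}), or by taking product constructions.

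Now apply the packing argument. Let $M\colon V\to\Delta(V)$ be any $(D_\infty,d)$-Lipschitz mapping. Applied to the singleton event $\{x\}$, the Lipschitz property gives, for every $x,y\in V$,
\[
\Pr[M(y)=x] \;\ge\; e^{-d(x,y)}\Pr[M(x)=x] \;\ge\; e^{-D}\Pr[M(x)=x].
\]
Fix any $y\in V$ and sum over $x\in V$; since $\sum_{x\in V}\Pr[M(y)=x]=1$, we obtain
\[
\sum_{x\in V}\Pr[M(x)=x] \;\le\; e^{D} \;=\; e^{\alpha k},
\]
so the average of $\Pr[M(x)=x]$ over uniform $x\in V$ is at most $e^{\alpha k}/|V| \le e^{\alpha k}/2^{(1-H(\delta))k}$. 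Choosing $\delta$ small and then $\alpha>0$ a small constant with $\alpha < \tfrac12 (1 - H(\delta))\ln 2$, this average drops below $1/2$. Since $M(x)\ne x$ forces $d(x,M(x))\ge R$ (the minimum nonzero distance in $V$), we conclude
\[
\E_{x\in V}\E_{y\sim M(x)} d(x,y) \;\ge\; R\cdot \Pr_{x\in V}[M(x)\ne x] \;\ge\; \tfrac{1}{2}\alpha\delta k \;=\; \Omega(k).
\]

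The main obstacle is calibrating parameters so that the code size $|C|$ decisively dominates $e^D$: since $|C|$ grows like $e^{(1-H(\delta))k \ln 2}$ while $e^D = e^{\alpha k}$, and $\ln 2 < 1$, one cannot use the unscaled Hamming metric directly---the scaling $\alpha$ must be chosen strictly smaller than $(1-H(\delta))\ln 2$ to leave margin. Everything else (Gilbert--Varshamov, the doubling-dimension bound, the one-line Lipschitz inequality) is standard, so this calibration is the only delicate point.
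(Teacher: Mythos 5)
Your core argument is genuinely different from the paper's and is, in my view, cleaner. The paper constructs $V$ as $n$ random points on a sphere of radius $100k$ in $\mathbb{R}^{O(k)}$ and proves the lower bound via a two-stage argument: a claim that random balls sample $V$ nearly uniformly, a ``good set'' $G$ of low-error points, and a random $k/2$-packing $P_x$ whose images under $M$ overload probability mass. You instead take an explicit Gilbert--Varshamov code under a scaled Hamming metric and use a one-line pigeonhole: applying the $(\dtoDinfty)$-Lipschitz condition to the singleton $\{x\}$ gives $\Pr[M(y)=x]\ge e^{-d(x,y)}\Pr[M(x)=x]$, so summing over $x$ for any fixed $y$ yields $\sum_{x}\Pr[M(x)=x]\le e^{\mathrm{diam}}$; when $|V|\gg e^{\mathrm{diam}}$, most points must map away, forcing loss at least the minimum distance. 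This avoids any measure-concentration claims and any appeal to randomness in the construction. Both proofs are morally ``packing arguments'' in the spirit of~\cite{HardtTa10}, but yours localizes the packing to singletons rather than balls of radius $k/2$, which makes the bookkeeping much lighter.

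There is, however, a real gap in how you handle the quantifier ``for every large enough $n$.'' Your code has a fixed size $|C|=2^{\Theta(k)}$, and the two fixes you suggest do not work as stated. Padding with well-separated extra points fails: a far-away point $z$ can satisfy $M(z)=z$ with certainty without violating any Lipschitz constraint, so its expected loss is zero, and the overall average degrades to roughly $(|C|/n)\cdot\Omega(k)$, which is $o(k)$ once $n\gg|C|$. A generic product construction fails for a different reason: crossing with a space of $m$ points typically adds $\Theta(\log m)$ to the doubling dimension, so $O(k)$ is lost for large $n$. The correct repair is to tile $m=\Theta(n/|C|)$ translated copies of $C$ inside $(\mathbb{R}^k,\ell_1)$ (recall scaled Hamming distance is $\ell_1$), placing the copies far enough apart that cross-copy Lipschitz constraints are vacuous; the ambient $\ell_1^k$ keeps the doubling dimension $O(k)$, and your pigeonhole applies verbatim per copy, summing $\Pr[M(y)=x]$ only over $x$ in the copy containing $y$ and using the per-copy diameter $\alpha k$. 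A secondary, more cosmetic issue: with $\delta=1/4$ and $\alpha<\tfrac12(1-H(\delta))\ln 2$, the requirement $|C|>2e^{\alpha k}$ only kicks in once $k$ exceeds a moderate constant, so for small $k\ge2$ you should either observe that $\Omega(k)=\Omega(1)$ and use a handful of equidistant points, or absorb this into $n_0(k)$ and the implicit constants. With those two repairs your argument is complete and, I think, preferable in exposition to the paper's.
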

\begin{proof}
Construct~$V$ by randomly picking $n$ points from a $r$-dimensional sphere
of radius $100k.$ We will choose $n$ sufficiently large and $r=O(k).$
Endow $V$ with the Euclidean distance~$d.$ Since
$V\subseteq\mathbb{R}^r$ and $r=O(k)$
it follows from a well-known fact that the doubling dimension of $(V,d)$
is bounded by~$O(k).$

\begin{claim}\label{cla:random}
Let $X$ be the distribution obtained by choosing a random $x\in V$ and
outputting a random $y\in B(x,k).$ Then, for sufficiently large $n,$ the
distribution $X$ has statistical distance at most $1/100$ from the uniform
distribution over $V.$
\end{claim}
\begin{proof}
The claim follows from standard arguments showing that for large enough $n$
every point $y\in V$ is contained in approximately equally many balls of
radius~$k.$
\end{proof}

Let $M$ denote any $(\dtoDinfty)$-Lipschitz mapping and denote its
error on a point $x\in V$ by
\[
R(x) =\E_{y\sim M(x)}d(x,y)\mper
\]
and put $R= \E_{x\in V}R(x).$
Let $G=\{x\in V\colon R(x)\le 2R\}.$
By Markov's inequality $|G|\ge n/2.$

Now, pick $x\in V$ uniformly at random
and choose a set $P_x$ of $2^{2k}$ random points
(with replacement) from $B(x,k).$
For sufficiently large dimension $r=O(k),$ it follows from concentration of
measure on the sphere that $P_x$ forms a $k/2$-packing with probability,
say, $1/10.$

Moreover, by
Claim~\ref{cla:random}, for random $x\in V$ and random $y\in B(x,k),$ the
probability that $y\in G$ is at least $|G|/|V| - 1/100\ge 1/3.$ Hence, with
high probability,
\begin{equation}
|P_x\cap G|\ge 2^{2k}/10\mper
\end{equation}

Now, suppose $M$ satisfies $R\le k/100.$ We will lead this to a contradiction
thus showing that $M$ has average error at least $k/100.$ Indeed, under the
assumption that $R\le k/100,$ we have that for every $y\in G,$
\begin{equation}
\Pr\left\{ M(y)\in B(y,k/50)\right\}\ge\frac12\mcom
\end{equation}
and therefore
\begin{align*}
1\ge \Pr\left\{M(x)\in \cup_{y\in P_x\cap G}B(y,k/2)\right\}
& = \sum_{y\in P_x\cap G}\Pr\left\{M(x)\in B(y,k/2)\right\} \tag{since $P_x$ is
a $k/2$-packing}\\
& \ge \sum_{y\in P_x\cap G}\exp(-k)\Pr(M(y)\in B(y,k/2)) \tag{by the Lipschitz
condition}\\
& = \frac{2^{2k}}{10}\cdot \frac{\exp(-k)}2
 > 1\mper
\end{align*}
This is a contradiction which shows that $R>k/100.$
\end{proof}

\begin{open}
Can we improve the exponential dependence on the doubling dimension in our
upper bound?
\end{open}


\section{Discussion and Future Directions}
\label{sec:future}

In this paper we introduced a framework for characterizing fairness in
classification.  The key element in this framework is a requirement that
similar people be treated similarly in the classification.  We developed an
optimization approach which balanced these similarity constraints with a
vendor's loss function.  and analyzed when this local fairness condition
implies statistical parity, a strong notion of equal treatment.  We also
presented an alternative formulation enforcing statistical parity, which is
especially useful to allow preferential treatment of individuals from some
group.  We remark that although we have focused on using the metric as a
method of defining and enforcing fairness, one can also use our approach to
{\em certify} fairness (or to detect unfairness).  This permits us to evaluate
classifiers even when fairness is defined based on data that simply isn't
available to the classification algorithm\footnote{This observation is due to
Boaz Barak.}.

Below we consider some open questions and directions for future work.

\subsection{On the Similarity Metric}
\label{sec:metric}
As noted above, one of the most challenging aspects of our work is justifying
the availability of a distance metric.  We argue here that the notion of a
metric already exists in many classification problems, and we consider some
approaches to building such a metric.

\subsubsection{Defining a metric on individuals}

The imposition of a metric already occurs in many
classification processes. Examples include credit scores\footnote{We remark
that the credit score is a one-dimensional metric that suggests an obvious
interpretation as a measure of quality rather than a measure of similarity.
When the metric is defined over multiple attributes such an interpretation is
no longer clear.} for loan applications,
and combinations of test scores and grades for some college admissions.
In some cases, for reasons of social engineering, metrics may be
adjusted based on membership in various groups, for example, to
increase geographic and ethnic diversity.

The construction of a suitable metric can be partially
automated using existing machine learning techniques. This is true in
particular for distances $d(x,y)$ where $x$ and $y$ are both in the same
protected set or both in the general population. When comparing individuals
from different groups, we may need human insight and domain information.
This is discussed further
in Section~\ref{sec:building}.

Another direction, which intrigues us but which have not
yet pursued, is particularly relevant to the context of on-line
services (or advertising):
allow users to specify attributes they do or do not want
to have taken into account in classifying content of interest.
The risk, as noted early on in this work, is
that attributes may
have redundant encodings in other attributes,
including encodings of which the
user, the ad network, and the advertisers may all be unaware.
Our notion of fairness
can potentially give a refinement of the ``user empowerment" approach
by allowing a user to participate in defining the metric that is used
when providing services to this user (one can imagine for example a
menu of metrics each one supposed to protect some subset of
attributes). Further research into the feasibility of this approach is
needed, in particular, our discussion throughout
this paper has assumed that a single metric
is used across the board.  Can we make sense out of the idea of applying
different metrics to different users?

\subsubsection{Building a metric via metric labeling}
\label{sec:building}

One approach to building the metric is to first build a metric on $\Scomp$,
say, using techniques from machine learning, and then ``inject''
members of $S$ into the metric by mapping them to members of $S$ in a
fashion consistent with observed information.
In our case, this observed information would come from the human
insight and domain information mentioned above.
Formally, this can be captured by the problem
of {\em metric labeling}~\cite{KleinbergTa02}:
we have a collection of $|\Scomp|$ {\em labels} for which a metric is
defined, together with $|S|$ objects, each of which is to be assigned a label.

It may be expensive to access this extra information
needed for metric labeling. We may ask the question of how much information
do we need in order to approximate the result we would get were we to
have all this information.  This is related to
our next question.

\subsubsection{How much information is needed?}
\label{sec:howmuch}
Suppose there is an unknown metric $d^*$ (the right metric) that we are trying
to find. We can ask an expert panel to tell us $d^*(x,y)$ given $(x,y)\in V^2.$
The experts are costly and we are trying to minimize the number of calls we need
to make. The question is: How many queries $q$ do we need to make to be able to compute
a metric $d\colon V\times V\to\mathbb{R}$ such that the distortion between $d$
and $d^*$ is at most $C,$ i.e.,
\begin{equation}
\sup_{x,y\in V}
\max\left\{\frac{d(x,y)}{d^*(x,y)},\frac{d^*(x,y)}{d(x,y)}\right\}\le C\mper
\end{equation}


The problem can be seen as a variant of the well-studied question of
constructing \emph{spanners}.  A spanner is a small implicit
representation of a metric $d^*.$ While this is not exactly what we want, it
seems that certain spanner constructions work in our setting as well,
are willing to relax the embedding problem by permitting a certain fraction of
the embedded edges to have arbitrary distortion, as any finite metric can
be embedded, with constant slack and constant distortion,
into constant-dimensional Euclidean space~\cite{AbrahamBCDGKNS05}.

\subsection{Case Study on Applications in Health Care}
An interesting direction for a case study is suggested by
another Wall Street Journal article (11/19/2010) that describes
the (currently experimental) practice of
insurance risk assessment via online tracking.  For example,
food purchases and exercise habits correlate with certain diseases.
This is a stimulating, albeit alarming, development.  In the most
individual-friendly interpretation described
in the article, this provides a method for assessing risk
that is faster and less expensive than the current practice of testing blood and
urine samples.
``Deloitte and the life insurers stress the databases wouldn't be used
to make final decisions about applicants. Rather, the process would
simply speed up applications from people who look like good
risks. Other people would go through the traditional assessment
process.''~\cite{WSJ:insurance}
Nonetheless, there are risks to the insurers, and preventing
discrimination based on protected status should therefore be of interest:
\begin{quote}
``The information sold by marketing-database firms is lightly regulated. But
  using it in the life-insurance application process would ``raise questions''
  about whether the data would be subject to the federal Fair Credit Reporting
  Act, says Rebecca Kuehn of the Federal Trade Commission's division of
  privacy and identity protection. The law's provisions kick in when ``adverse
  action'' is taken against a person, such as a decision to deny insurance or
  increase rates.''
\end{quote}

As mentioned in the introduction, the AALIM project~\cite{AALIM} provides
similarity information suitable for the health care setting. While their work
is currently restricted to the area of cardiology, future work may extend to
other medical domains. Such similarity information may be used to
assemble a metric that decides which individual have similar medical
conditions. Our framework could then employ this metric to ensure that similar
patients receive similar health care policies. This would help to address the
concerns articulated above. We pose it as an interesting direction for future
work to investigate how a suitable fairness metric could be extracted from the
AALIM system.

\subsection{Does Fairness Hide Information?}
We have already discussed the need for 
hiding (non-)membership in $S$
in ensuring fairness. We now ask a converse question:
Does fairness in the context of advertising hide information from the
advertiser?

Statistical parity has the interesting effect that it eliminates
\emph{redundant encodings} of $S$ in terms of $A,$ in the sense that
after applying $M,$ there is no $f\colon A\to\bits$ that can be biased
against $S$ in any way. This prevents certain attacks that aim to
determine membership in~$S$.

Unfortunately, this property is not hereditary.  Indeed, suppose that the
advertiser wishes to target HIV-positive people.  If the set of HIV-positive
people is protected, then the advertiser is stymied by the statistical parity
constraint.  However, suppose it so happens that the advertiser's utility
function is extremely high on people who are not only HIV-positive but who
also have AIDS.  Consider a mapping that satisfies statistical parity for
``HIV-positive,'' but also maximizes the advertiser's utility.  We expect that
the necessary error of such a mapping will be on members of
``HIV$\backslash$AIDS,'' that is, people who are HIV-positive but who do not
have AIDS.  In particular, we don't expect the mapping to satisfy statistical
parity for ``AIDS'' -- the fraction of people with AIDS seeing the
advertisement may be much higher than the fraction of people with AIDS in the
population as a whole. Hence, the advertiser can in fact target ``AIDS''.

Alternatively, suppose people with AIDS are mapped to a region $B \subset A$,
as is a $|{\rm AIDS}|/|\rm{HIV~positive}|$ fraction of HIV-negative
individuals.  Thus, being mapped to $B$ maintains statistical parity for the
set of HIV-positive individuals, meaning that the probability that a random
HIV-positive individual is mapped to $B$ is the same as the probability that a
random member of the whole population is mapped to $B$.  Assume further that
mappings to $A \setminus B$ also maintains parity.  Now the advertiser can
refuse to do business with all people with AIDS, sacrificing just a small amount of business
in the HIV-negative community.

These examples show that statistical parity is not a good 
method of hiding sensitive information in
targeted advertising.  A natural question, not yet pursued, is whether we can
get better protection using the Lipschitz property with a suitable metric.

\section*{Acknowledgments}

We would like to thank Amos Fiat for a long and wonderful discussion
which started this project.  We also thank Ittai Abraham, Boaz Barak,
Mike Hintze, Jon Kleinberg, Robi Krauthgamer, Deirdre Mulligan, Ofer
Neiman, Kobbi Nissim, Aaron Roth, and Tal Zarsky for helpful
discussions.  Finally, we are deeply grateful to Micah Altman for
bringing to our attention key philosophical and economics works.

\bibliographystyle{alpha}
\bibliography{awareness}

\appendix

\section{Catalog of Evils}
\label{sec:taxonomy}
We briefly summarize here behaviors against which we wish to protect.
We make no attempt to be formal.
Let $S$ be a protected set.
\begin{enumerate}
\item
{\em Blatant explicit discrimination.}  This is when membership in $S$
is explicitly tested for and a ``worse'' outcome is given to members of
$S$ than to members of $\Scomp$.
\item
{\em Discrimination Based on Redundant Encoding.}  Here the explicit
test for membership in $S$ is replaced by a test that is, in practice,
essentially equivalent.  This is a successful attack against
``fairness through blindness,'' in which
the idea is to simply ignore protected attributes such as \emph{sex} or
\emph{race}. However, when 
personalization and advertising decisions are based on
months or years of on-line activity, there is a very real possibility that
membership in a given demographic group is embedded holographically in the
history.  Simply deleting, say, the Facebook ``sex'' and ``Interested in
men/women'' bits almost surely does not hide homosexuality.  This point was
argued by the (somewhat informal) ``Gaydar'' study~\cite{gaydar} in which a
threshold was found for predicting, based on the sexual preferences of his
male friends, whether or not a given male is interested in men.
Such {\em redundant encodings} of sexual preference and other attributes need
not be explicitly known or recognized as such, and yet can still have a
discriminatory effect.

\item
{\em Redlining.}
A well-known form of discrimination based on redundant encoding.
The following definition appears in an article by~\cite{Hunt05}, which
contains the history of the term, the practice, and its consequences:
``Redlining is the practice of arbitrarily denying or limiting financial
services to specific neighborhoods, generally because its residents are people
of color or are poor.''
\item
{\em Cutting off business with a segment of the population in which
membership in the protected set is disproportionately high.}
A generalization of redlining, in which
members of $S$ need not be a majority of the redlined population;
instead, the fraction of the redlined population belonging to $S$ may
simply exceed the fraction of $S$ in the population as a whole.
\item
{\em \Sfp .}
Here the \vendor {}
advertiser is willing to cut off its nose to spite its face,
deliberately choosing the ``wrong'' members of $S$
in order to build a bad ``track record'' for $S$. A less malicious \vendor{}
may simply select \emph{random} members of~$S$ rather than qualified members,
thus inadvertently building a bad track record for~$S.$
\item
{\em Reverse tokenism.}
This concept arose in the context of imagining what might be a convincing
refutation to the claim ``The bank denied me a loan because I am a member
of~$S$.''  One possible refutation might be the exhibition of an ``obviously
more qualified'' member of $\Scomp$ who is also denied a loan.
This might be compelling, but by sacrificing one really good candidate
$c \in \Scomp$ the bank could refute all charges of discrimination
against $S$.  That is, $c$ is a token rejectee; hence the term
``reverse tokenism''  (``tokenism'' usually refers to accepting a token
member of $S$).
We remark that the general question of explaining decisions seems
quite difficult, a situation only made worse by the existence
of redundant encodings of attributes.
\end{enumerate}

\end{document}